\newcommand{\remove}[1]{}
\newtheorem{thm}{Theorem}[section]
\newtheorem{proposition}[thm]{Proposition}
\newtheorem{lemma}[thm]{Lemma}
\newtheorem{dfn}[thm]{Definition}
\newtheorem{cor}[thm]{Corollary}
\def\_{\,\,\,\,\,}
\newcommand{\B}{\{0,1\}}
\renewcommand{\O}{O_k} 
\newcommand{\eps}{\epsilon}
\newcommand{\rkcomp}{$(r,k)$-compatible\xspace}
\newcommand{\rkcons}{$(r,k)$-consistent\xspace}
\newcommand{\poly}{n^{O(1)}}
\def\MMC{{\mathcal M(\C^k \to \C^n,d)}}
\def\MMC2{{\mathcal M(\C^k \to \C^k,d)}}
\newcommand{\A}{{\cal A}}
\renewcommand{\B}{{\cal B}}
\newcommand{\F}{{\cal F}}
\renewcommand{\P}{{\cal P}}
\newcommand{\rpaths}{$r$-\textrm{simple} paths\xspace}
\newcommand{\rkpath}{$r$-\textrm{simple} $k$-path\xspace}
\newcommand{\rsimp}{\textsc{$r$-Simple $k$-Path}\xspace}
\newcommand{\rpacking}{\textsc{$(r,p,q)$-Packing}\xspace}
\newcommand{\rset}{$r$-\textrm{set}\xspace}
\newcommand{\rsets}{$r$-\textrm{set}s\xspace}
\newcommand{\rkset}{$(r,k)$-\textrm{set}\xspace}
\newcommand{\rksets}{$(r,k)$-\textrm{set}s\xspace}
\newcommand{\rksep}{$(r,k)$-separator\xspace}
\newcommand{\sepsize}{\O(r^{6k/r}\cdot 2^{O(k/r)}\cdot \log n)}
\newcommand{\repsize} {\O(r^{6k/r}\cdot 2^{O(k/r)}\cdot \log n)}
\newcommand{\repsiz} {r^{6k/r}\cdot 2^{O(k/r)}\cdot \log n}
\newcommand{\repcomputenopoly}{r^{12k/r}\cdot 2^{O(k/r)}\cdot  \log n} 
\newcommand{\repcomputenon}{r^{12k/r}\cdot 2^{O(k/r)}} 
\newcommand{\mondetectcomputenon}{r^{18k/r}\cdot 2^{O(k/r)}} 
\newcommand{\sepcompute}{r^{6k/r}\cdot 2^{O(k/r)}\cdot n\log n} 
\newcommand{\qset}{$q$-\textrm{set}\xspace}
\newcommand{\qsets}{$q$-\textrm{set}s\xspace}
\newcommand{\rpack}{$r$-packing\xspace}
\newcommand{\rpacks}{$r$-packings\xspace}
\newcommand{\comp}[1]{\overline{#1}}
\newcommand{\trim}[1]{\mathrm{Trim}_{\F}(#1)}
\newcommand{\rep}[1]{\hat{#1}}
\newcommand{\lops}[2]{$(n,#1,#2)$-lopsided universal set\xspace}
\newcommand{\lopss}[2]{$(n,#1,#2)$-lopsided universal sets\xspace}
\newcommand{\rkmon}{$(r,k)$-monomial\xspace}
\newcommand{\rmondetect}{\textsc{$(r,k)$-Monomial Detection}\xspace}
\newcommand{\rmon}{$r$-monomial\xspace}
\newcommand{\rmons}{$r$-monomials\xspace}
\newcommand{\rr}{[r]_0}
\newcommand{\rkmons}{$(r,k)$-monomials\xspace}
\newcommand{\wt}{\mathbf{w}}
 \newcommand{\pr}{\mathbb{P}}
\newcommand{\set}[1]{\{#1\}}
\def\E{{\mathbb{E}}}
\def\Z{{\mathbb{Z}}}
\renewcommand{\H}{{\cal H}}
\newcommand{\oftypenon}[2]{\set{\,{#1}\to{#2}\,}}
\newcommand{\oftype}[2]{\subseteq \oftypenon{#1}{#2}}
\newcommand{\defparproblemu}[4]{
  \vspace{1mm}
\noindent\fbox{
  \begin{minipage}{0.97\textwidth}
  #1 \\[0.2cm]
  {\bf{Input:}} #2  \\
  {\bf{Parameters:}} #3 \\
  {\bf{Question:}} #4
  \end{minipage}
  }
  \vspace{1mm}
}
\title{Fast Algorithms for Parameterized Problems with Relaxed Disjointness Constraints\thanks{The research leading to these results has received funding from the European Community's Seventh Framework Programme (FP7/2007-2013) under grant agreement number 257575 and 240258. Mi. Pilipczuk is currently holding a post-doc position at Warsaw Center of Mathematics and Computer Science and is supported by Polish National Science Centre grant DEC-2013/11/D/ST6/03073.}}
\author{Ariel Gabizon\thanks{Department of Computer Science, Technion, Israel, \texttt{ariel.gabizon@gmail.com}.}
\and Daniel Lokshtanov\thanks{Department of Informatics, University of Bergen, Norway, \texttt{daniello@ii.uib.no}.}
\and Micha\l{} Pilipczuk\thanks{Institute of Informatics, University of Warsaw, Poland, \texttt{michal.pilipczuk@mimuw.edu.pl}.}
}
\begin{document}
\maketitle{}
\begin{abstract}
In parameterized complexity, it is a natural idea to consider different generalizations of classic problems. Usually, such generalization are obtained by introducing a ``relaxation'' variable, where the original problem corresponds to setting this variable to a constant value. For instance, the problem of packing sets of size at most $p$ into a given universe generalizes the {\sc{Maximum Matching}} problem, which is recovered by taking $p=2$. Most often, the complexity of the problem increases with the relaxation variable, but very recently Abasi et al. have given a surprising example of a problem --- $r$-Simple $k$-Path --- that can be solved by a randomized algorithm with running time $O^*(2^{O(k \frac{\log r}{r})})$. That is, the complexity of the problem decreases with $r$.

In this paper we pursue further the direction sketched by Abasi et al. Our main contribution is a derandomization tool that provides a deterministic counterpart of the main technical result of Abasi et al.: the $O^*(2^{O(k \frac{\log r}{r})})$ algorithm for $(r,k)$-Monomial Detection, which is the problem of finding a monomial of total degree $k$ and individual degrees at most $r$ in a polynomial given as an arithmetic circuit. Our technique works for a large class of circuits, and in particular it can be used to derandomize the result of Abasi et al. for $r$-Simple $k$-Path. On our way to this result we introduce the notion of representative sets for multisets, which may be of independent interest.

Finally, we give two more examples of problems that were already studied in the literature, where the same relaxation phenomenon happens. The first one is a natural relaxation of the {\sc{Set Packing}} problem, where we allow the packed sets to overlap at each element at most $r$ times. The second one is {\sc{Degree Bounded Spanning Tree}}, where we seek for a spanning tree of the graph with a small maximum degree.
\end{abstract}

\section{Introduction}
Many of the combinatorial optimization problems studied in theoretical computer science are idealized mathematical models of real-world problems. When the simplest model is well-understood, it can be enriched to better capture the real-world problem one actually wants to solve. Thus it comes as no surprise that many of the well-studied computational problems generalize each other: the {\sc Constraint Satisfaction Problem} generalizes {\sc Satisfiability}, the problem of finding a spanning tree with maximum degree at most $d$ generalizes {\sc Hamiltonian Path}, while the problem of packing sets of size $3$ generalizes packing sets of size $2$, also known as the {\sc Maximum Matching} problem.


By definition, the generalized problem is computationally harder than the original. However it is sometimes the case that the most difficult instances of the generalized problem are actually instances of the original problem. In other words,  the ``further away'' an instance of the generalized problem is from being an instance of the original, the easier the instance is.
Abasi et. al \cite{ABGH14} initiated the study of this phenomenon in parameterized complexity (we refer the reader to the textbooks~\cite{the-awesome-book,DowneyF13,FlumGrohebook,Niedermeierbook06} for an introduction to parameterized complexity). In particular, they study the {\sc $r$-Simple $k$-Path} problem. Here the input is a graph $G$, and integers $k$ and $r$, and the objective is to determine whether there is an $r$-simple $k$-path in $G$, where an $r$-simple $k$-path is a sequence $v_1, v_2, \ldots, v_k$ of vertices such that every pair of consecutive vertices is adjacent and no vertex of $G$ is repeated more than $r$ times in the sequence. Observe that for $r=1$ the problem is exactly the problem of finding a simple path of length $k$ in $G$. On the other hand, for $r=k$ the problem is easily solvable in polynomial time, as one just has to look for a walk in $G$ of length $k$. Thus, gradually increasing $r$ from $1$ to $k$ should provide a sequence of computational problems that become easier as $r$ increases.  Abasi et al. \cite{ABGH14} confirm this intuition by giving a randomized algorithm for {\sc $r$-Simple $k$-Path} with running time  $O(r^{2k/r}n^{O(1)})$.

In this paper we continue the investigation of algorithms for problems with a {\em relaxation parameter} $r$ that interpolates between an NP-hard and a polynomial time solvable problem. We show that in several interesting cases one can get a sequence of algorithms with better and  better running times as the relaxation parameter $r$ increases, essentially providing a smooth transition from the NP hard to the polynomial time solvable case.

Our main technical contribution is a new algorithm for the $(r,k)$-{\sc Monomial Detection} problem. Here the input is an arithmetic circuit $C$ that computes a polynomial $f$ of degree $k$ in $n$ variables $x_1,\ldots,x_n$. The task is to determine whether the polynomial $f$ has a monomial $\Pi_{i = 1}^n x_i^{a_i}$, such that $0 \leq a_i \leq r$ for every $i \leq n$. The main result of Abasi et al.~\cite{ABGH14} is a randomized algorithm for $(r,k)$-{\sc Monomial Detection} with running time   $O(r^{2k/r}|C|n^{O(1)})$, and their algorithm for {\sc $r$-Simple $k$-Path} is obtained using a reduction to $(r,k)$-{\sc Monomial Detection}. We give a {\em{deterministic}} algorithm for the problem with running time $r^{O(k/r)}|C|n^{O(1)}$ in the case when the circuit $C$ is {\em{non-canceling}}. Formally, this means that the circuit contains only variables at its leaves (i.e., no constants) and only addition and multiplication gates (i.e, no subtraction gates). Informally, all monomials of the polynomials computed at intermediate gates of $C$ contribute to the polynomial computed by $C$.

Comparing our algorithm with the algorithm of Abasi et al.~\cite{ABGH14}, our algorithm is slower by a constant factor in the exponent of $r$, and only works for non-cancelling circuits. However our algorithm is {\em deterministic} (while the one by Abasi et al. is randomized) and also works for the {\em weighted} variant of the problem, while the one by Abasi et al. does not. In the weighted variant each variable $x_i$ has a non-negative integer weight $w_i$, and the weight of a monomial $\Pi_{i = 1}^n x_i^{a_i}$ is defined as $\sum_{i = 1}^n w_ia_i$. The task is to determine whether there exists a monomial $\Pi_{i = 1}^n x_i^{a_i}$, such that $0 \leq a_i \leq r$ for every $i \leq n$, and if so, to return one of minimum weight.  As a direct consequence we obtain the first deterministic algorithm for {\sc $r$-Simple $k$-Path} with running time  $r^{O(k/r)}|C|n^{O(1)}$, and the first algorithm with such a running time for weighted  {\sc $r$-Simple $k$-Path}.

The significance of an in-depth study of $(r,k)$-{\sc Monomial Detection}, is that it is the natural ``relaxation parameter''-based generalization of the fundamental {\sc Multi-linear Monomial Detection} problem. The {\sc Multi-linear Monomial Detection} problem is simply $(r,k)$-{\sc Monomial Detection} with  $r=1$. A multitude of parameterized problems reduce to {\sc Multi-linear Monomial Detection}~\cite{BjorklundHKK10,the-awesome-book,FLPS14,K08,W09}. Thus, obtaining good algorithms $(r,k)$-{\sc Monomial Detection} is an important step towards efficient algorithms for relaxation parameter-variants of these problems. For some problems, such as {\sc $k$-Path}, efficient algorithms for the relaxation parameter variant (i.e {\sc $r$-Simple $k$-Path}) follow directly from the algorithms for $(r,k)$-{\sc Monomial Detection}. In this paper we give two more examples of fundamental problems for which efficient algorithms for $(r,k)$-{\sc Monomial Detection} lead to efficient algorithms for their ``relaxation parameter''-variant.

Our first example is the {\sc $(r,p,q)$-Packing} problem. Here the input is a family ${\cal F}$ of sets of size $q$ over a universe of size $n$, together with integers $r$ and $p$. The task is to find a subfamily ${\cal A} \subseteq {\cal F}$ of size at least $p$ such that every element of the universe is contained in at most $r$ sets in ${\cal A}$. Observe that {\sc $(r,p,q)$-Packing} is the relaxation parameter variant of the classic {\sc Set Packing} problem ({\sc $(r,p,q)$-Packing} with $r=1$). We give an algorithm for {\sc $(r,p,q)$-Packing} with running time $2^{O(pq \cdot \frac{\log r}{r})}|{\cal F}|n^{O(1)}$. For $r=1$ our algorithm matches the best known algorithm~\cite{BjorklundHKK10} for {\sc Set Packing}, up to constants in the exponent, and when $r$ grows our algorithm is significantly faster than $2^{pq}|{\cal F}|n^{O(1)}$. Just as for {\sc $r$-Simple $k$-Path}, our algorithm also works for weighted variants of the problem. We remark that  {\sc $(r,p,q)$-Packing}  was also studied by Fernau et al.~\cite{FernauLR15} from the perspective of kernelization.

Our second example is the {\sc Degree-Bounded Spanning Tree} problem. Here, we are given as input a graph $G$ and integer $d$, and the task is to determine whether $G$ has a spanning tree $T$ whose maximum degree does not exceed $d$. For $d = 2$ this problem is equivalent to {\sc Hamiltonian Path}, and hence the problem is NP-complete in general, but for $d=n-1$ it boils down to checking the connectedness of $G$. Thus, {\sc Degree-Bounded Spanning Tree} can be thought of as a relaxation parameter variant of {\sc Hamiltonian Path}. The problem has received significant attention in the field of approximation algorithms: there are classic results of Goemans~\cite{Goemans06} and of Singh and Lau~\cite{SinghL15} that give additive approximation algorithms for the problem and its weighted variant. From the point of view of exact algorithms, the currently fastest exact algorithm, working for any value of $d$, is due to Fomin et al.~\cite{FominGLS12} and has running time $O(2^{n+o(n)})$. In this work, we give a randomized algorithm for {\sc Degree-Bounded Spanning Tree} with running time $2^{O(n \frac{\log d}{d})}$, by reducing the problem to an instance of $(r,k)$-{\sc Monomial Detection}. Thus, our algorithm significantly outperforms the algorithm of Fomin et al.~\cite{FominGLS12} for all super-constant $d$, and runs in polynomial time for $d = \Omega(n)$. Interestingly, the instance of $(r,k)$-{\sc Monomial Detection} that we create crucially uses subtraction, since the constructed circuit computes the determinant of some matrix. Thus we are not able to apply our algorithm for non-cancelling circuits, and have to resort to the randomized algorithm of Abasi et al.~\cite{ABGH14} instead. Obtaining a deterministic algorithm for {\sc Degree-Bounded Spanning Tree} that would match the running time of our algorithm, or extending the result to the weighted setting, remains as an interesting open problem.

\paragraph{Our methods.}
The starting point for our algorithms is the notion of {\em representative sets}.
If ${\cal A}$ is a family of sets, with all sets in ${\cal A}$ having the same size $p$, we say that a subfamily ${\cal A}' \subseteq {\cal A}$ $q$-{\em represents} ${\cal A}$ if
for every set $B$ of size $q$, whenever there exists a set $A \in {\cal A}$ such that $A$ is disjoint from $B$, then there also exists a set $A' \in {\cal A}'$ such that $A'$ is disjoint from $B$.

Representative sets were defined by Monien~\cite{Monien85}, and were recently used to give efficient parameterized algorithms for a number of problems~\cite{FLPS14,FLS14,PinterSZ14,SZ14,Zehavi13,Zehavi14}, including $k$-{\sc Path}~\cite{FLPS14,FLS14,SZ14}, {\sc Set Packing}~\cite{SZ14,Zehavi13,Zehavi14} and {\sc Multi-linear Monomial Detection}~\cite{FLPS14}. It is therefore very tempting to try to use representative sets also for the relaxation parameter variants of these problems. However, it looks very hard to directly use representative sets in this setting. On a superficial level the difficulty lies in that representative sets are useful to guarantee {\em disjointedness}, while the solutions to the relaxation parameter variants of the considered problems may self-intersect up to $r$ times.

We overcome this difficulty by generalizing the notion of representative sets to multisets. When taking the union of two multisets $A$ and $B$, an element that appears $a$ times in $A$ and $b$ times in $B$ will appear $a+b$ times in the union $A + B$. Thus, if two regular sets $A$ and $B$ are viewed as multisets, they are disjoint if and only if no element appears more than once in $A+B$.  We can now relax the notion of disjointedness and require that no element appears more than $r$ times in $A+B$. Specifically, if ${\cal A}$ is a family of multisets, with all multisets in ${\cal A}$ having the same size $p$ (counting duplicates), we say that a subfamily ${\cal A}' \subseteq {\cal A}$ $q$-{\em represents} ${\cal A}$ if the following condition is satisfied. For every multiset $B$ of size $q$, whenever there exists an $A \in {\cal A}$ such that no element appears more than $r$ times in $A+B$, there also exists an $A' \in {\cal A}'$ such that no element appears more than $r$ times in $A' + B$. The majority of the technical effort in the paper is spent on proving that every family ${\cal A}$ of multisets has a relatively small $q$-representative family ${\cal A}'$ in this new setting, and to give an efficient algorithm to compute ${\cal A}'$ from ${\cal A}$. The formal statement of this result can be found in Corollary~\ref{cor:repset}.

On the way to develop our algorithm for computing representative sets of multisets, we give a new construction of a pseudo-random object called \emph{lopsided universal sets}. Informally speaking, an \emph{\lops{p}{q}} is a set of strings such that, when focusing on any $k\triangleq p+q$ locations, we see all patterns of hamming weight $p$.
These objects have been of interest for a while in mathematics and theoretical computer science
under the name \emph{Cover Free Families} (Cf. \cite{Bshouty14}).
We give, for the first time, an explicit construction of an \lops{p}{q} whose size is only polynomially larger than optimal for all $p$ and $q$.
See Theorem \ref{thm:lopsided_main} in Section \ref{sec:sepfam} for a formal statement.
Both our algorithm for computing representative sets of multisets, and the new construction of lopsided universal sets may be of independent interest.

\paragraph{Outline of the paper.} In Section~\ref{sec:prelim} we give the necessary definitions and set up notational conventions. In Section~\ref{sec:repset} we give our construction of representative sets for multisets. This construction requires an auxiliary tool called minimal separating families. The construction of representative sets for multisets in Section~\ref{sec:repset} assumes that an appropriate construction of minimal separating families is given as a black box, while the construction of minimal separating families is deferred to Section~\ref{sec:sepfam}. Our new construction of lopsided universal sets is a corollary of the construction of minimal separating families, and is also explained in Section~\ref{sec:sepfam}. In Section~\ref{sec:algapp} we use the new construction for representative sets of multisets to give efficient algorithms for  $(r,k)$-{\sc Monomial Detection}, {\sc $(r,p,q)$-Packing} and  {\sc $r$-Simple $k$-Path}. In Section~\ref{sec:kirchoff} we present our algorithm for {\sc Degree-Bounded Spanning Tree}. Finally, we conclude by discussing open problems and directions for future research in Section~\ref{sec:concl}.

\section{Preliminaries}\label{sec:prelim}

\paragraph*{Notation.} 
Throughout the paper, we use the notation $\O$ to hide $k^{O(1)}$ terms. We denote $[n]=\{1,2,\ldots,n\}$. For sets $A$ and $B$, by $\oftypenon{A}{B}$ we denote the set of all functions from $A$ to $B$. 
The notation $\triangleq$ is used to introduce new objects defined by formulas on the right hand side.



\paragraph*{Hashing families.} 
Recall that, for an integer $t\geq 1$, we say that a family of functions $\H\oftype{[n]}{[m]}$ is a \emph{$t$-perfect hash family},
if for every $C\subseteq [n]$ of size $|C| = t$ there is $f\in \H$ that is injective on $T$. Alon, Yuster and Zwick \cite{AYZ08} used a construction of Moni Naor
(based on ideas from Naor et al.~\cite{NSS95}) to hash a subset of size $t$ into a world of size $t^2$ using a very small set of functions:
\begin{thm}[\cite{AYZ08} based on Naor]\label{thm:hash_to_k^2}
For integers $1\leq t\leq n$, a $t$-perfect hash family $\H\oftype{[n]}{[t^2]}$
of size $t^{O(1)}\cdot \log n$ can be constructed in time $O(t^{O(1)}\cdot n\cdot \log n)$
\end{thm}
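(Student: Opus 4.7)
The plan is the classical two-level strategy. The first level builds a family $\H_1 \oftype{[n]}{[M]}$ of size $t^{O(1)} \cdot \log n$, mapping into a small universe of size $M = t^{O(1)}$, with the property that every $t$-subset of $[n]$ is mapped injectively by at least one member of $\H_1$. The second level composes every $h_1 \in \H_1$ with every member of a $t$-perfect hash family $\H_2 \oftype{[M]}{[t^2]}$ on the small universe $[M]$. Since $M = t^{O(1)}$, such an $\H_2$ of size $t^{O(1)}$ can be obtained by any of several explicit constructions on small universes (for instance, by derandomizing the standard probabilistic argument, taking advantage of the fact that $\log \binom{M}{t} = O(t \log t)$ is tiny). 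The composed family $\{h_2 \circ h_1 : h_1 \in \H_1,\, h_2 \in \H_2\}$ is then $t$-perfect from $[n]$ into $[t^2]$ and has the claimed size.

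The technical heart is therefore the construction of $\H_1$. A uniformly random function $h : [n] \to [M]$ with $M = \Theta(t^2)$ is injective on any fixed $t$-set with constant probability, by a union bound on the $\binom{t}{2}$ potential pairwise collisions, each of probability $1/M$. Since this event depends only on how $h$ acts on the $t$ given inputs, one can replace uniform randomness with any distribution over $[M]^n$ that is close to uniform on every $t$-tuple of coordinates --- that is, an almost $t$-wise independent distribution. Standard constructions of such distributions over alphabets of size $t^{O(1)}$ achieve support size $t^{O(1)} \cdot \log n$, and each function in the resulting family can be described by a short seed and evaluated on any input in time $t^{O(1)}$. Running through all seeds yields $\H_1$ explicitly within the claimed running time.

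The main obstacle is attaining the $\log n$, rather than $(\log n)^c$ or $n^{O(1)}$, dependence in the family size. Plain $t$-wise independent polynomial hashing (over a field of size $\Omega(n)$) produces families of size $n^{\Omega(t)}$, which is far too large. Settling for approximate rather than exact $t$-wise independence allows the use of pseudorandom sample spaces of nearly optimal seed length, as in the construction of Moni Naor based on the ideas of Naor--Schulman--Srinivasan. Plugging in the appropriate parameters --- bias $\epsilon = 1/t^{\Theta(1)}$ so that the collision union bound survives the substitution, and target alphabet $[M]$ with $M = \Theta(t^2)$ --- makes the whole pipeline work within the size bound $t^{O(1)} \cdot \log n$ and total running time $O(t^{O(1)} \cdot n \log n)$, where the latter is dominated by tabulating each of the $t^{O(1)} \cdot \log n$ composed functions on all of $[n]$.
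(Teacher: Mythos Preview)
The paper does not give its own proof of this theorem; it is quoted as a black box from \cite{AYZ08} (attributed there to Moni Naor). So there is no in-paper proof to compare against, only the known construction.

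Your overall shape is in the right neighborhood, but the specific derandomization tool you invoke is too strong and, as stated, does not meet the size bound. The collision argument you sketch --- expected number of colliding pairs in a random map to $[\Theta(t^2)]$ is below $1/2$ --- is a \emph{pairwise} statement: the expected number of collisions is $\sum_{\{x,y\}\subseteq C}\pr[h(x)=h(y)]$, and each summand depends on only two coordinates of $h$. Hence almost \emph{pairwise} independence with error $\epsilon=\Theta(1/t^2)$ already guarantees that at least a constant fraction of the family is injective on any fixed $t$-set, so the family itself is $t$-perfect. The paper's own Theorem~\ref{thm:almostpairwise} (the $\epsilon$-pairwise independent family of size $m^{O(1)}\cdot\log n$) is exactly this ingredient, and with $m=\Theta(t^2)$ it directly yields a $t$-perfect hash family $[n]\to[t^2]$ of size $t^{O(1)}\cdot\log n$. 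No second level is needed.

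By contrast, asking for almost $t$-wise independence, as you do, does not obviously give support size $t^{O(1)}\cdot\log n$. Going through $\epsilon$-biased spaces one picks up a $2^{\Theta(t\log m)}$ blowup (since one needs $\epsilon$ exponentially small in the number of relevant bits), and the direct polynomial constructions have support $n^{\Theta(t)}$. So the sentence ``Standard constructions of such distributions over alphabets of size $t^{O(1)}$ achieve support size $t^{O(1)}\cdot\log n$'' is the gap: it is true for pairwise, not for $t$-wise, and the argument only requires pairwise.
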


We will also use the following perfect hash family given by Naor, Schulman and Srinivasan \cite{NSS95}.
\begin{thm}[\cite{NSS95}]\label{thm:NSShash}
For integers $1\leq t\leq n$, a $t$-perfect hash family $\H\oftype{[k^2]}{[t]}$ of size $e^{t+ O(\log^2 t)}\cdot \log k$
can be constructed in time $O(e^{t+ O(\log^2 t)}\cdot k\cdot \log k)$.
\end{thm}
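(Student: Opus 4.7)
The plan is to split the construction into two stages, composing a range-reduction with a direct construction in the regime where the domain is already small. First, instantiate Theorem~\ref{thm:hash_to_k^2} with $k^2$ in place of $n$ to obtain a $t$-perfect hash family $\H_1\oftype{[k^2]}{[t^2]}$ of size $t^{O(1)}\log k$. Then it suffices to construct a second $t$-perfect hash family $\H_2\oftype{[t^2]}{[t]}$ of size $e^{t+O(\log^2 t)}$, independent of $k$: the composition family $\{h_2\circ h_1 : h_1\in\H_1,\,h_2\in\H_2\}$ is $t$-perfect from $[k^2]$ to $[t]$ (composition of two injections on $t$-subsets is injective), and its size multiplies out to $e^{t+O(\log^2 t)}\cdot \log k$ as required.

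A probabilistic benchmark for $\H_2$ already gives almost the right shape. For a fixed $t$-subset $S\subseteq [t^2]$, a uniformly random function $f:[t^2]\to[t]$ is injective on $S$ with probability $t!/t^t = \Theta(e^{-t}\sqrt{t})$ by Stirling, so a union bound over the $\binom{t^2}{t}\leq 2^{O(t\log t)}$ subsets of size $t$ shows that $O(e^t\cdot t\log t)$ random functions form a $t$-perfect family with positive probability. This matches the target up to the $\log^2 t$ term in the exponent, which is the slack one pays to move from existence to an explicit, deterministic construction.

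The derandomization I would attempt is a recursive splitter-based construction. A splitter is a family of partitions of $[t^2]$ into $b$ blocks such that every $t$-subset distributes evenly across the blocks, with at most $\lceil t/b\rceil$ elements per block; given such a splitter with $b\approx \log t$ and an assignment of disjoint intervals of $[t]$ to blocks, the problem reduces to hashing each block with a smaller perfect hash family of parameter $t/b$, and the glued function is forced to be injective on every $t$-subset. Iterating the reduction with logarithmic depth accumulates a $\log t$ factor per level, producing the $O(\log^2 t)$ overhead in the exponent. The main obstacle, and the reason one would invoke this as a black box, is producing explicit splitters of near-optimal size together with a matching construction time --- this is the combinatorial heart of~\cite{NSS95}, obtained via polynomial and code-based designs that one would cite directly rather than reprove.
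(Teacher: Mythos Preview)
The paper does not prove this statement; it is quoted as a black-box result from~\cite{NSS95} and immediately followed by the next paragraph, with no proof given. There is therefore nothing to compare your argument against in the paper itself.

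That said, your sketch is a reasonable outline of how the NSS construction goes, and the decomposition into a range-reduction step followed by a splitter-based recursive construction on a domain of size $t^{O(1)}$ is indeed the shape of their argument. One minor technical point: invoking Theorem~\ref{thm:hash_to_k^2} with $n=k^2$ yields construction time $t^{O(1)}\cdot k^2\log k$ for the first stage, which is a factor of $k$ worse than the stated $O(e^{t+O(\log^2 t)}\cdot k\log k)$. This discrepancy is irrelevant for the paper's applications (all such factors are swallowed by the $\O=O_k$ notation), but if you want to match the stated bound literally you would need the direct NSS construction on $[k^2]$ rather than bootstrapping through Theorem~\ref{thm:hash_to_k^2}. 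Since you already concede that the splitter construction is the combinatorial heart and would be cited rather than reproved, your proposal is in practice no different from what the paper does: treat the theorem as a citation.
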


\paragraph*{Separating families.}
We will be interested in constructing families of perfect hash functions
that, in addition to being injective on a set $C$, have the property of sending another large set $D$ to
an output \emph{disjoint} from the image of $C$.
We call such a family of functions a \emph{separating family}.

\begin{dfn}[Separating family]\label{dfn:seperating}
Fix integers $t,k,s,n$ such that $1\leq t\leq n$.
For disjoint subsets $C,D\subseteq [n]$, we say that a function $h\colon [n] \to [s]$
\emph{separates} $C$ from $D$ if
\begin{itemize}
\item $h$ is injective on $C$; and
\item there are no collisions between $C$ and
$D$. That is, $h(C)\cap h(D)=\emptyset$.
\end{itemize}

A family of functions $\H\oftype{[n]}{[s]}$
is \emph{$(t,k,s)$-separating}
if for every disjoint subsets $C,D\subseteq [n]$
with $|C| = t $ and $|D| \leq k-t$, there is a function $h\in \H$
that separates $C$ from $D$.

We say that $\H$ is \emph{$(t,k,s)$-separating with probability $\gamma$}
if for any fixed $C$ and $D$ with sizes as above, a function $h$ chosen uniformly at random from $\H$ separates $C$ from $D$
with probability at least $\gamma$.
\end{dfn}

For us, the most important case of separating families is when the range size is $|C|+1$.
In this case we use the term \emph{minimal separating family}.
It will also be convenient to assume in the definition that $C$ is mapped
to the first $|C|$ elements in the range.
\begin{dfn}[Minimal separating family]\label{dfn:minseparating}
A family of functions $\H\oftype{[n]}{[t+1]}$
is \emph{$(t,k)$-minimal separating} if
for every disjoint subsets $C,D\subseteq [n]$
with $|C| =  t $ and $|D| \leq  k-t$, there is a function $h\in \H$ such that
\begin{itemize}
\item $h(C)=[t]$.
\item $h(D)\subseteq \{t+1\}$.
\end{itemize}
\end{dfn}

\section{Multiset separators and representative sets}\label{sec:repset}

The purpose of this section is to formally define and construct representative sets for multisets. We will use, as an auxiliary result, an efficient construction of a small separating family.

\begin{thm}\label{thm:sepratingmain}
Fix integers $n,t,k$ such that $1\leq t\leq \min(n,k)$. Then a $(t,k)$-minimal separating family of size $\O((k/t)^{2t}\cdot 2^{O(t)}\cdot \log n)$
can be constructed in time $\O((k/t)^{2t}\cdot 2^{O(t)}\cdot n\cdot \log n)$.
\end{thm}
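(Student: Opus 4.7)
The plan is to decompose the construction into two stages: a universe-reduction step followed by a direct separation on a small universe.

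In the first stage, apply Theorem~\ref{thm:hash_to_k^2} with parameter $k$ to obtain a $k$-perfect hash family $\H_1\colon [n]\to [k^2]$ of size $k^{O(1)}\log n$. Since $|C\cup D|\le k$ in any valid instance, some $h_1\in \H_1$ is injective on $C\cup D$, so $h_1(C)$ and $h_1(C)$ are disjoint subsets of $[k^2]$ of the correct sizes. It therefore suffices to construct a $(t,k)$-minimal separating family $\H_2\colon [k^2]\to [t+1]$ of size $\widetilde{O}((k/t)^{2t}\cdot 2^{O(t)})$; the final family is $\{h_2\circ h_1\colon h_1\in \H_1,\, h_2\in \H_2\}$, absorbing the $\log n$ factor.

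The second stage expresses $\H_2$ as a product $\mathcal{T}\times \H_{\mathrm{hash}}$ of two auxiliary objects. Here $\mathcal{T}$ is a family of subsets of $[k^2]$ with the ``lopsided covering'' property: for every disjoint $C,D\subseteq [k^2]$ with $|C|=t$ and $|D|\le k-t$, some $T\in \mathcal{T}$ satisfies $C\subseteq T$ and $T\cap D=\emptyset$. The family $\H_{\mathrm{hash}}$ is a $t$-perfect hash family $[k^2]\to [t]$ of size $e^{O(t)}\log k$, obtained directly from Theorem~\ref{thm:NSShash}. For each $(T,h)\in \mathcal{T}\times \H_{\mathrm{hash}}$, define $f_{T,h}\colon [k^2]\to [t+1]$ by $f_{T,h}(x)=h(x)$ if $x\in T$ and $f_{T,h}(x)=t+1$ otherwise. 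Correctness is immediate: given a valid $(C,D)$, pick $T\in \mathcal{T}$ containing $C$ and avoiding $D$, then any $h\in \H_{\mathrm{hash}}$ injective on $C\subseteq T$ bijects $C$ to $[t]$, and $f_{T,h}$ sends $C$ bijectively to $[t]$ and $D$ to $\{t+1\}$.

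The hard part will be constructing $\mathcal{T}$ of size $(k/t)^{2t}\cdot 2^{O(t)}\cdot k^{O(1)}$. The plan is to derandomize the natural biased-sampling argument: a random subset of $[k^2]$ obtained by including each element independently with probability $p=t/k$ satisfies the covering property for a fixed $(C,D)$ with probability at least $(t/k)^t(1-t/k)^{|D|}\ge (t/k)^t\cdot 2^{-O(t)}$, so $\widetilde{O}((k/t)^t\cdot 2^{O(t)})$ random subsets suffice probabilistically. To obtain an explicit family, one would apply a $k$-wise $\epsilon$-almost-independent sample space (in the style of Naor--Naor or Alon--Goldreich--H\aa stad--Peralta) on $k^2$ bits with bias $p=t/k$ and $\epsilon$ slightly below the per-pair success probability; standard constructions yield a sample space whose size is polynomial in the universe size divided by $\epsilon$, and the exponent $(k/t)^{2t}$ in the theorem is where the quadratic polynomial overhead of those constructions shows up, squaring the single $(k/t)^{t}$ coming from $1/\epsilon$. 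After tuning the parameters, combining the stages yields the claimed family size of $\widetilde{O}((k/t)^{2t}\cdot 2^{O(t)}\cdot \log n)$, and since each family member can be described and evaluated on an input in time $\widetilde{O}(\log n)$, the overall construction time matches the stated bound.
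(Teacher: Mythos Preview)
Your high-level decomposition is sound, and the first stage (reducing the universe to $[k^2]$ via Theorem~\ref{thm:hash_to_k^2}) is exactly what the paper does. The reduction of the second stage to a lopsided universal family $\mathcal{T}$ together with a $t$-perfect hash family is also correct and clean.

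The gap is in the construction of $\mathcal{T}$. You propose to derandomize the $p$-biased sampling via a ``$k$-wise $\epsilon$-almost-independent sample space with bias $p=t/k$,'' claiming size roughly $\text{poly}(k^2)\cdot(1/\epsilon)^2$ with $1/\epsilon\approx (k/t)^t\cdot 2^{O(t)}$. But the standard Naor--Naor / AGHP constructions produce \emph{unbiased} $\epsilon$-biased spaces; passing to $k$-wise $\epsilon'$-independence costs a factor of $2^{\Theta(k)}$ in $1/\epsilon$, and simulating $p$-biased coins by blocks of $\ell\approx\log(k/t)$ unbiased coins means you must fool events on $k\ell$ bits, making the blowup $(k/t)^{\Theta(k)}$. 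Exact $k$-wise independent $p$-biased constructions have size $(k^2)^{\Theta(k)}$. Either way you overshoot the target $(k/t)^{2t}\cdot 2^{O(t)}\cdot k^{O(1)}$ badly. In fact, the paper presents the $(k/t)^{O(t)}$-size lopsided universal set (Theorem~\ref{thm:lopsided_main}) as a \emph{new} result, obtained as a corollary of Theorem~\ref{thm:sepratingmain} rather than as an input to it; so your reduction is in the ``wrong direction.''

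The paper circumvents the need for $k$-wise independence altogether. After the same first step and after hashing $C$ bijectively into $t$ buckets via Theorem~\ref{thm:NSShash}, it guesses the load profile $(k_1,\ldots,k_t)$ of $D$ across the buckets (cost $\binom{k}{t}\le (ek/t)^t$), and then, \emph{within each bucket}, separates the single $C$-element from the $k_i$ elements of $D$ using only \emph{pairwise} almost-independence (Lemma~\ref{lem:seperatingWithpairwise}), succeeding with probability $\ge 1/2$ per bucket. The $t$ independent per-bucket choices form a combinatorial rectangle of sidewise density $1/2$, which is hit deterministically by the Linial--Luby--Saks--Zuckerman construction (Corollary~\ref{cor:LLSZhittingset}) at cost $2^{O(t)}\cdot k^{O(1)}$. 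A final choice of which sub-bucket contains each $c_i$ costs $\prod 4k_i\le (4k/t)^t$. Multiplying, one gets the claimed $(k/t)^{2t}\cdot 2^{O(t)}\cdot k^{O(1)}\cdot\log n$. The crucial idea you are missing is this reduction from $k$-wise to pairwise independence via bucketing plus the rectangle hitting set.
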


We defer the proof of Theorem~\ref{thm:sepratingmain} to Section~\ref{sec:sepfam}, and now we shall use it as a blackbox in order to provide a construction of representative sets for multisets. The primary tool for this is what we call a \emph{multiset separator} (see Definition \ref{dfn:multiseparator}).
Informally, a multiset separator is a not too large set of `witnesses' for the fact
that two multisets of bounded size do not jointly contain too many repetitions per element.

\paragraph*{Notation for multisets.}
Fix integers integers $n,r,k\geq 1$.
We use $\rr$ to denote $\set{0,\ldots,r}$.
An \emph{\rset} is a multiset $A$ where each element of $[n]$
appears at most $r$ times.
It will be convenient to think of $A$ as a vector in $\rr^n$,
where $A_i$ denotes the number of times $i$ appears in $A$.
We denote by $|A|$ the number of elements in $A$ counting repetitions.
That is, $|A| = \sum_{i=1}^n A_i $.
We refer to $|A|$ as the \emph{size} of $A$.
An \emph{\rkset} is an \rset $A\in \rr^n$,
where the number of elements with repetitions is at most $k$.
That is, $|A| \leq k$. For two multisets $A,B$ over $[n]$, 

Fix \rsets $A,B\in \rr^n$.
We say that $A\leq B$ when
$A_i\leq B_i$ for all $i\in [n]$.
By $\comp{A}\in \rr^n$ we denote
the ``complement'' of \rset $A$,
that is, $\comp{A}_i = r-A_i$ for all $i\in [n]$. By $A+B$ we denote the ``union'' of $A$ and $B$, that is, $(A+B)_i=A_i+B_i$ for all $i\in [n]$.
Suppose now that $A$ and $B$ are \rksets.
We say that $A$ and $B$ are \emph{\rkcomp}
if $A+B$ is also an \rkset, and $|A+B|=k$.
That is, the total number of elements with repetitions in $A$ and $B$ together is $k$
and any specific element $i\in [n]$ appears in $A$ and $B$ together at most $r$ times.
With the notation above at hand, we can define the central object needed for our algorithms.
\begin{dfn}[Multiset separator]\label{dfn:multiseparator}
Let $\F$ be a family of \rsets.
We say that $\F$ is an \rksep if
for any \rksets $A,B \in \rr^n$
that are \rkcomp, there exists $F\in \F$
such that $A\leq  F\leq \comp{B}$.
\end{dfn}

\paragraph*{Construction of multiset separators.} The following theorem shows how an \rksep can be constructed from a minimal separating family.
\begin{thm}\label{thm:multisetsepmain}
Fix integers $n,r,k$ such that $1<r\leq k$,
and let $t \triangleq  \lfloor 2k/r \rfloor$.
Suppose a $(t,k)$-minimal separating family $\H\oftype{[n]}{[t+1]}$
can be constructed in time $f(r,k,n)$.
Then an \rksep $\F$ of size $|\H|\cdot (r+1)^t$
can be constructed in time $\O(f(r,k,\max(n,t))\cdot (r+1)^t)$.
\end{thm}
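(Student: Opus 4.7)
The plan is to build $\F$ by enumeration. Put $N \triangleq \max(n,t)$ and, using the hypothesis, obtain a $(t,k)$-minimal separating family $\H\oftype{[N]}{[t+1]}$. For each $h \in \H$ and each $a \in \rr^t$ I define an $r$-set $F^{h,a} \in \rr^n$ by setting, for $i \in [n]$, $F^{h,a}_i = a(h(i))$ if $h(i) \leq t$, and $F^{h,a}_i = \lfloor r/2 \rfloor$ if $h(i) = t+1$. This yields $|\F| \leq |\H|\cdot (r+1)^t$ directly; since any construction of $\H$ already spends $\Omega(|\H| \cdot N)$ time on output, writing all of $\F$ costs an additional $O(|\H| \cdot (r+1)^t \cdot N)$ time, which is absorbed in the target bound $\O(f(r,k,\max(n,t)) \cdot (r+1)^t)$.

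To argue correctness, fix an $(r,k)$-compatible pair $(A,B)$ and set $C_A = \{i : A_i > r/2\}$, $C_B = \{i : B_i > r/2\}$. These are disjoint, since $A_i > r/2$ and $B_i > r/2$ would force $A_i + B_i > r$, contradicting compatibility. The averaging bounds $(r/2)|C_A| < |A|$ and $(r/2)|C_B| < |B|$ sum to $|C_A| + |C_B| < 2k/r$, yielding $|C_A \cup C_B| \leq t$. Moreover, integrality gives $A_i, B_i \leq \lfloor r/2 \rfloor$ for every $i \notin C_A \cup C_B$, whence $\lfloor r/2 \rfloor$ lies in the admissible window $[A_i, r - B_i]$. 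It therefore suffices to ``override'' the default value on $C_A \cup C_B$ only, prescribing $A_i$ for $i \in C_A$ and $r - B_i$ for $i \in C_B$.

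To realize this override through $\H$, I need a pair $(C', D')$ of disjoint subsets of $[N]$ with $|C'| = t$, $|D'| \leq k - t$, and $S \subseteq C' \cup D'$, where $S = \{i : A_i + B_i > 0\}$. Note $|S| \leq |A+B| = k$ and $C_A \cup C_B \subseteq S$, so I proceed by cases. If $|S| \geq t$, extend $C_A \cup C_B$ inside $S$ to a set $C'$ of size $t$ and take $D' = S \setminus C'$; then $|D'| = |S| - t \leq k - t$. If $|S| < t$, put all of $S$ into $C'$ and pad with $t - |S|$ elements from $[N] \setminus S$, which is possible since $N \geq t$; then $D' = \emptyset$. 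Applying the separating property of $\H$ produces $h$ with $h(C') = [t]$ bijectively and $h(D') \subseteq \{t+1\}$. Define $a : [t] \to \rr$ at each $j$ according to the unique preimage $i_j \in C'$: set $a(j) = A_{i_j}$ if $i_j \in C_A$, $a(j) = r - B_{i_j}$ if $i_j \in C_B$, and $a(j) = \lfloor r/2 \rfloor$ otherwise. A direct case check then verifies $A \leq F^{h,a} \leq \comp{B}$ on $[n]$: positions in $C_A \cup C_B$ receive their prescribed value (admissible by compatibility); positions in $D'$ receive the default $\lfloor r/2 \rfloor$ (admissible by the second paragraph, since $D' \cap (C_A \cup C_B) = \emptyset$); and positions outside $C' \cup D'$ lie outside $S$, so $A_i = B_i = 0$ and any value is admissible. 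The hard part is exactly this padding step --- reconciling $|C'| = t$ with $|D'| \leq k - t$ while covering all of $S$ is what forces the two-case split and the expansion of the universe to $[\max(n,t)]$.
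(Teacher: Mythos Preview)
Your proof is correct and follows essentially the same approach as the paper: the paper also sets the default coordinate value to $\lfloor r/2\rfloor$, identifies the same ``heavy'' set $C_0=\{i:A_i>r/2\text{ or }B_i>r/2\}=C_A\cup C_B$, pads it to size exactly $t$ first inside the support $U=S$ and then inside the enlarged universe, and chooses $w_j$ anywhere in $[A_{i_j},\,r-B_{i_j}]$. One small slip in your final case check: you cover $C_A\cup C_B$, $D'$, and $[n]\setminus(C'\cup D')$, but omit positions $i\in C'\setminus(C_A\cup C_B)$; however, by your own definition of $a$ these receive $\lfloor r/2\rfloor$ as well, and since such $i\notin C_A\cup C_B$ the same ``second paragraph'' argument shows $A_i\le\lfloor r/2\rfloor\le r-B_i$.
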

\begin{proof}
In the proof we will assume that $n\geq t$, since otherwise we can apply the same construction for $n$ increased to $t$, and at the end remove from each multiset of the obtained $\F$ all the elements from $[t]\setminus [n]$. Note that thus we apply the construction of a minimal separating family to the set of size $\max(n,t)$, rather than $n$. Also, observe that from the assumption that $r>1$ it follows that $t\leq k$.

Let $\H$ be the constructed $(t,k)$-minimal separating family of functions from $[n]$ to $[t+1]$. For each $h:[n]\to [t+1]$ in $\H$, and
for each $w=(w_1,\ldots,w_t)\in \rr^t$,
we construct the following \rset $F^{h,w}\in \rr^n$.
For all $j\in [t]$ and all $i\in [n]$ with $h(i)=j$, we put $F^{h,w}_i = w_j$.
For all $i\in [n]$ with $h(i)=t+1$, we put $F^{h,w}_i=r/2$.
Let $\F$ consist of all the constructed \rsets $F^{h,w}_i$.
Thus we have that
\[|\F|= |\H|\cdot (r+1)^{t},\]
and $\F$ clearly can be constructed in time as claimed in the theorem statement.
We are left with proving that $\F$ is indeed an \rksep.

Fix \rkcomp \rksets $A,B \in \rr^n$. Let $U$ be the set of all elements that appear in $A$ or in $B$, that is, $U = \set{i\in [n]\mid A_i>0\; \mathrm{or} \; B_i>0}$.
Denote by $C_0\subseteq U$ the sets of elements
in $A$ and $B$ that appear more than  $r/2$ times in one of the sets, that is, $C_0= \set{i\in [n]\mid A_i> r/2 \; \mathrm{or} \; B_i> r/2}$. Note that since $A$ and $B$ are \rksets, we have that
$|C_0|\leq \lfloor 2k/r \rfloor= t$. Let $C$ be a superset of $C_0$ of size exactly $t$, constructed by augmenting $C_0$ with arbitrary elements of $U\setminus C_0$ up to size $t$, and if there is not enough of them, then by additionally augmenting it with the remaining number of arbitrary elements of $[n]\setminus U$. Note that this is always possible since $t\leq n$. Let $D=U\setminus C$. Since $A$ and $B$ are \rkcomp, we have that $|U|\leq k$ and from the construction of $C$ it follows that $|D|\leq k-t$.

Therefore, there exists some $h\in \H$ that separates $C$ from $D$.
For $j\in [t]$, let $i_j$ be the unique element of $C$ mapped to $j$ under $h$.
Choose $w_j\in \rr$ such that $A_{i_j}\leq w_j \leq r-B_{i_j}$; such $w_j$ exists since $A$ and $B$ are \rkcomp.
Let $w=(w_1,\ldots,w_t)$.
We claim that
$A\leq F^{h,w}\leq \comp{B}$.
For $i\in C$, the choice of $w$ guarantees that
$A_{i}\leq F^{h,w}_i \leq \comp{B}_i$.
For $i\in D$ we have $F^{h,w}_i = \lfloor r/2\rfloor$,
and from the definition of $D$ we
have that $D\cap C_0=\emptyset$. So for such $i$ it holds that $A_i \leq \lfloor r/2\rfloor \leq \comp{B}_i$.
Finally, for $i\notin C\cup D$
we have that $A_i=0$ and $\comp{B}_i =r$, so surely
$A_i \leq F^{h,w}_i \leq \comp{B}_i$.
\end{proof}

By combining Theorems~\ref{thm:sepratingmain} and~\ref{thm:multisetsepmain} we immediately obtain the following construction.

\begin{cor}\label{cor:multisetsepmain}
Fix integers $n,r,k$ such that $1< r\leq k$. Then an \rksep $\F$ of size $\O(r^{6k/r}\cdot 2^{O(k/r)}\cdot \log n)$
can be constructed in time $\O(r^{6k/r}\cdot 2^{O(k/r)}\cdot n\cdot \log n)$
\end{cor}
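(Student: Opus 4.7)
The plan is to directly combine Theorem~\ref{thm:sepratingmain} with Theorem~\ref{thm:multisetsepmain}, instantiating the parameter $t$ in the latter as $t \triangleq \lfloor 2k/r\rfloor$ (exactly as Theorem~\ref{thm:multisetsepmain} prescribes). The remainder is purely arithmetic on the resulting size and time bounds.

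First I would verify that the hypotheses of both theorems are met with this choice of $t$. Since $1<r\leq k$, we have $2k/r\geq 2$ and $2k/r\leq k$, so $1\leq t\leq \min(n,k)$ (after, if necessary, replacing $n$ by $\max(n,t)$ as in the proof of Theorem~\ref{thm:multisetsepmain}; this substitution does not affect any of the claimed bounds). Then Theorem~\ref{thm:sepratingmain} supplies a $(t,k)$-minimal separating family $\H\oftype{[n]}{[t+1]}$ of size $\O((k/t)^{2t}\cdot 2^{O(t)}\cdot \log n)$, constructible in time $\O((k/t)^{2t}\cdot 2^{O(t)}\cdot n\cdot \log n)$.

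Feeding this $\H$ into Theorem~\ref{thm:multisetsepmain} produces an \rksep $\F$ of size $|\H|\cdot (r+1)^t$, constructible in time $\O((k/t)^{2t}\cdot 2^{O(t)}\cdot n\cdot \log n \cdot (r+1)^t)$. To finish, I would estimate the two factors: since $t=\lfloor 2k/r\rfloor$ we have $k/t\leq r$ and hence $(k/t)^{2t}\leq r^{2t}\leq r^{4k/r}$; and $(r+1)^t\leq r^{t}\cdot 2^{O(t)}\leq r^{2k/r}\cdot 2^{O(k/r)}$. Multiplying these yields a size bound of $\O(r^{6k/r}\cdot 2^{O(k/r)}\cdot \log n)$ and a running-time bound of $\O(r^{6k/r}\cdot 2^{O(k/r)}\cdot n\cdot \log n)$, which matches the statement of the corollary.

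I do not anticipate any real obstacle, as the corollary is a plug-and-chug consequence of the two preceding theorems. The only points to watch are the degenerate regime $t\leq O(1)$ (which happens when $r$ is close to $k$) and the case $n<t$; both are handled uniformly by the bounds above and by the $\max(n,t)$ padding already employed inside Theorem~\ref{thm:multisetsepmain}, so no separate treatment is required.
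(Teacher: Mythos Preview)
Your proposal is correct and follows essentially the same route as the paper's own proof: instantiate $t=\lfloor 2k/r\rfloor$, invoke Theorem~\ref{thm:sepratingmain} to get $\H$, plug it into Theorem~\ref{thm:multisetsepmain}, and simplify the resulting product. Your write-up in fact spells out the arithmetic (e.g.\ $k/t\leq r$ and $(r+1)^t\leq r^t\cdot 2^{O(t)}$) more explicitly than the paper does, and your remark about the $\max(n,t)$ padding is exactly how the edge case $n<t$ is absorbed.
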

\begin{proof}
Let $t\triangleq \lfloor 2k/r \rfloor$. By Theorem~\ref{thm:sepratingmain}, a $(t,k)$-minimal separating family $\H\oftype{[n]}{[t+1]}$ of size $\O((k/t)^{2t}\cdot 2^{O(t)}\cdot \log n)$
can be constructed in time $\O((k/t)^{2t}\cdot 2^{O(t)}\cdot n\cdot \log n)$ from Theorem \ref{thm:sepratingmain}.
Using this construction in Theorem \ref{thm:multisetsepmain}, we obtain an \rksep $\F$ such that
\[|\F|= |\H|\cdot (r+1)^{t} = \O(r^{6k/r}\cdot 2^{O(k/r)}\cdot \log n).\]
Moreover, from Theorem~\ref{thm:multisetsepmain} it follows that $\F$ can be constructed in time $\O(r^{6k/r}\cdot 2^{O(k/r)}\cdot n\cdot \log n)$.
\end{proof}

\paragraph*{Multisets over a weighted universe.} 
Before proceeding, we discuss the issue of how the considered multisets will be equipped with \emph{weights}.
For simplicity, we assume that the universe $\set{1,\ldots,n}$ is weighted, that is, each element $i\in \set{1,\ldots,n}$ is assigned an integer weight $\wt(i)$.
We define the weight of a multiset as the sum of the weights of its elements counting repetitions.
Formally, for $A\in \rr^n$ we have
\[\wt(A) = \sum_{i=1}^n A_i\cdot \wt(i).\]
Whenever we talk about a {\em{weighted family}} of multisets, we mean that the universe $[n]$ is equipped with a weight function and the weights of the multisets are defined as in the formula above.

Let us remark that the results to follow can be also extended to a more general setting where each multiset is assigned its own weight that is not directly related to its elements. However, for concreteness we now focus on the simpler case. We discuss briefly the generalization in Section~\ref{sec:algapp}, where we argue that our tools can be also used to solve the edge-weighted variant of \rsimp.

\paragraph*{Representative sets for multisets.} We are ready to define the notion of a representative set for a family of multisets.

\begin{dfn}[Representative sets for multisets]\label{dfn:representativeset}
Let $\P$ be a weighted family of \rksets. We say that a subfamily $\rep{\P}\subseteq \P$ \emph{represents} $\P$
if for every \rkset $Q$ the following holds.
If there exists some $P\in \P$ of weight $w$ that is \rkcomp
with $Q$, then there also exists some $P'\in \rep{P}$ of weight $w'\leq w$
that is \rkcomp with $Q$.
\end{dfn}


The following definition and lemma show that having an \rksep is sufficient for constructing representative sets.
\begin{dfn}\label{dfn:trimming}
Let $\P$ be a weighted family of \rsets and
let $\F$ be a family of \rksets.
The weighted family $\trim{\P}\subseteq \P$ is defined as follows:
For each $F\in \F$, and for each $1\leq i \leq k$, check if there exists some
$P\in \P$ with $|P|=i$ and $P\leq F$. If so, insert into $\trim{\P}$ some
$P\in \P$ that is of minimal weight among those with $|P|=i$ and $P\leq F$.
\end{dfn}

\begin{lemma}\label{lem:trim_is_rep}
Let $\F$ be an \rksep and let $\P$ be a weighted family of \rksets.
Then $\trim{\P}$ represents $\P$.
\end{lemma}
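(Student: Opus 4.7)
The plan is to unpack the definitions and chain together the two key properties: the separator guarantees a pointwise ``upper envelope'' $F$ for any compatible pair, and the construction of $\trim{\P}$ picks a minimum-weight $\P$-member sitting under each $F$ for each prescribed size.

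First, I would fix an arbitrary \rkset $Q$ and a witness $P\in \P$ of weight $w$ that is \rkcomp with $Q$. Set $i\triangleq |P|$. By the definition of \rkcomp, $P+Q$ is an \rset and $|P+Q|=k$, which is exactly the hypothesis needed to invoke the separator property from Definition~\ref{dfn:multiseparator}: there exists $F\in \F$ with $P\leq F\leq \comp{Q}$.

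Next, I would appeal to Definition~\ref{dfn:trimming} applied to this particular $F$ and this particular $i$. Since $P$ itself is a candidate (it lies in $\P$, has $|P|=i$, and satisfies $P\leq F$), the construction of $\trim{\P}$ must have inserted some $P'\in \P$ with $|P'|=i$, $P'\leq F$, and of minimal weight among such candidates; in particular $\wt(P')\leq \wt(P)=w$.

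Finally I would verify that this $P'$ is \rkcomp with $Q$, which is the only nontrivial step. Combining $P'\leq F$ with $F\leq \comp{Q}$ (componentwise) and adding $Q$ to both sides gives $P'+Q\leq \comp{Q}+Q = r\cdot \mathbf{1}$ coordinatewise, so $P'+Q$ is an \rset. For the size condition, since $P'$ and $Q$ are multisets in $\rr^n$, $|P'+Q|=|P'|+|Q|=|P|+|Q|=|P+Q|=k$, where the last equality uses that $P$ was \rkcomp with $Q$. Hence $P'$ is \rkcomp with $Q$ and witnesses the representation property, completing the argument.

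The only potentially subtle point is the role of the size parameter $i$ in Definition~\ref{dfn:trimming}: one has to index by $|P|$ rather than just by $F$, because the compatibility condition $|P'+Q|=k$ requires $|P'|=|P|$. Once this is observed, the rest is a direct chase through the definitions, and I do not expect any genuine obstacle.
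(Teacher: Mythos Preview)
Your proposal is correct and follows essentially the same argument as the paper: fix $Q$ and a compatible $P$, invoke the separator to obtain $F$ with $P\leq F\leq \comp{Q}$, use the definition of $\trim{\P}$ at size $i=|P|=k-|Q|$ to extract a minimum-weight $P'\leq F$, and then verify $P'\leq \comp{Q}$ together with $|P'+Q|=k$. Your write-up is slightly more explicit in spelling out why $P'+Q$ is an \rset (via $P'+Q\leq \comp{Q}+Q$) and in flagging the role of the size index $i$, but the structure is identical.
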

\begin{proof}
 Fix an \rkset $Q$ and suppose there is an \rkset $P\in \P$
 that is \rkcomp with $P$.
 In particular, we have $|P| = k-|Q|$.
 Since $\F$ is an \rksep, there exists $F\in \F$ such that $P\leq F \leq \comp{Q}$.
 As $P\leq F$, when constructing $\trim{\P}$
 we must have inserted into it an \rkset
 $P'\in \P$ of size $k-|Q|$ 
 such that $P'\leq F$ and $\wt(P')\leq \wt(P)$.
 Therefore $P'\leq \comp{Q}$, which implies that $P'+Q$
 is an \rkset. As $|P'+Q| = k$, we have that $P'$ and $Q$ are \rkcomp.
\end{proof}

We can now combine Lemma~\ref{lem:trim_is_rep} with the construction of an \rksep from~Corollary \ref{cor:multisetsepmain},
and thus obtain a construction of a small representative family for a weighted family of multisets.
\begin{cor}\label{cor:repset}
There exists a deterministic algorithm that, given a weighted family $\P$ or \rksets, 
runs in time $\O(|\P|\cdot \sepcompute)$ and returns
returns a family $\rep{P}\subseteq \P$ that represents $\P$ and has size $\repsize$.
\end{cor}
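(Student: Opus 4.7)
The statement is essentially a direct combination of the tools developed in the section, so the plan is largely an assembly-and-accounting exercise rather than a fresh argument. First I would invoke Corollary~\ref{cor:multisetsepmain} to produce an \rksep $\F$ of size $\repsize$ in time $\O(\sepcompute)$. Then I would compute $\trim{\P}$ using the procedure in Definition~\ref{dfn:trimming}: enumerate every pair $(F,i)\in \F\times [k]$, and for each such pair scan through $\P$ once to find, among those $P\in \P$ with $|P|=i$ and $P\leq F$, a representative of minimum weight. Setting $\rep{\P}\triangleq \trim{\P}$, Lemma~\ref{lem:trim_is_rep} immediately yields that $\rep{\P}$ represents $\P$, delivering the correctness part.

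The size bound follows from a trivial cardinality count: by construction $|\trim{\P}|\leq k\cdot |\F|$, and since the notation $\O(\cdot)$ absorbs the factor $k$, this is $\repsize$ as required. The only nontrivial accounting is the running time, which I would argue as follows. Building $\F$ costs $\O(\sepcompute)$, dominated by the claimed bound. The trimming phase examines, for each $F\in \F$ and each $P\in \P$, the inequality $P\leq F$; this test can be performed in $O(n)$ time by comparing the two vectors in $\rr^n$ coordinate-wise, and the minimum-weight bookkeeping across the $k$ size classes adds only $O(k)$ overhead per comparison. Thus the total cost of trimming is $O(|\F|\cdot |\P|\cdot n\cdot k)$, which after collapsing $k^{O(1)}$ factors into $\O(\cdot)$ becomes $\O(|\P|\cdot \sepcompute)$, matching the claim.

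I do not foresee a real obstacle here: everything reduces to plugging numbers into previously established statements. The closest thing to a subtle point is making sure that the $\O(\cdot)$ notation correctly absorbs both the polynomial-in-$k$ overhead from the size bound (factor $k$) and from the trimming loop (factor $k$), and that the bound $\sepcompute$ on the construction of $\F$ is dominated by (equivalently, can be folded into) the $|\P|\cdot \sepcompute$ term in the final running time. Both are immediate since we may assume $|\P|\geq 1$.
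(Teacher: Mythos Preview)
Your proposal is correct and follows essentially the same approach as the paper: construct the \rksep $\F$ via Corollary~\ref{cor:multisetsepmain}, set $\rep{\P}=\trim{\P}$, invoke Lemma~\ref{lem:trim_is_rep} for correctness, and bound the running time by $\O(|\P|\cdot|\F|\cdot n)$. Your accounting is slightly more explicit about the factor $k$ from the size classes, but this is absorbed into $\O(\cdot)$ exactly as you say, and the paper's proof handles it the same way implicitly.
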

\begin{proof}
 Let $\F$ be the \rksep of size $\sepsize$ given by Corollary \ref{cor:multisetsepmain}; recall that
 $\F$ can be computed in time $\O(\sepcompute)$.
 We compute $\rep{P}=\trim{\P}$ which represents $\P$ by Lemma~\ref{lem:trim_is_rep}.
 The construction of $\trim{\P}$ amounts to going over all pairs of \rsets $P\in \P$ and $F\in \F$
 and checking whether $P\leq F$.
 Thus, the computation takes time at most $\O(|\P|\cdot |\F|\cdot n)  = \O(|\P|\cdot \sepcompute)$.

\end{proof}

\section{Construction of a separating family}\label{sec:sepfam}
The purpose of this section is to prove Theorem~\ref{thm:sepratingmain}, that is, to construct a small separating
family of functions. First, we need to introduce some auxiliary results.

\newcommand{\prodRect}{\prod_{i\in [t]} [m_i]}

\paragraph*{Hitting combinatorial rectangles.} We first recall the notion of a hitting set for combinatorial rectangles. For a sequence of integers $(m_1,m_2,\ldots,m_t)$, by $\prodRect$ we denote $[m_1]\times [m_2]\times \ldots \times [m_t]$.
\begin{dfn}\label{dfn:rechittingset}
Let $R\subseteq \prodRect$ be
a set of the form $R_1\times \ldots\times R_t$, where
$R_i\subseteq [m_i]$ for all $i\in [t]$.
We say that $R$ is a \emph{combinatorial rectangle with sidewise density $\gamma$},
if for every $i\in [t]$ we have that $|R_i|\geq \gamma \cdot m_i$.
A set $H\subseteq \prodRect$ is a \emph{hitting set for rectangles with sidewise density $\gamma$}
if for every set $R\subseteq \prodRect$
that is a combinatorial rectangle of sidewise density $\gamma$,
it holds that $R\cap H \neq \emptyset$.
\end{dfn}

Linial et al.~\cite{LLSZ97} gave the following construction of a hitting set for combinatorial rectangles.
\begin{thm}[\cite{LLSZ97}]\label{thm:LLSZhittingset}
A hitting set $H\subseteq [m]^t$ for rectangles with sidewise density $1/3$
of size $|H|= 2^{O(t)}\cdot m^{O(1)}$ can be constructed
in time $2^{O(t)}\cdot m^{O(1)}$.
\end{thm}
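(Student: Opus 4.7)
The plan is to establish the bound $2^{O(t)}\cdot m^{O(1)}$ by a probabilistic existence argument and then derandomize, exploiting the product structure of combinatorial rectangles to avoid paying either $m^{O(t)}$ or $2^{O(t^2)}$ in the seed.

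First, I would prove existence by a union bound. A uniformly random point in $[m]^t$ hits any fixed density-$1/3$ rectangle $R = R_1 \times \cdots \times R_t$ with probability exactly $\prod_i (|R_i|/m) \geq 3^{-t}$. It suffices to hit the ``maximal'' rectangles where each $|R_i|=\lceil m/3\rceil$; there are at most $\binom{m}{\lceil m/3\rceil}^t \leq 2^{H(1/3)mt}$ such rectangles, where $H$ is binary entropy. Taking $N = \Theta(3^t \cdot mt)$ independent random samples makes the probability that all samples miss some fixed rectangle at most $(1-3^{-t})^N \ll 2^{-H(1/3)mt}$, so a union bound yields a hitting set of size $N = 2^{O(t)} \cdot m^{O(1)}$ with positive probability.

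Second, I would derandomize. The key observation is that a density-$1/3$ rectangle test is a read-once conjunction $\bigwedge_{i=1}^t [x_i \in R_i]$, i.e., an AND of $t$ independent coordinate-wise predicates, each accepting a fraction $\geq 1/3$ of inputs. I would try two composable tools. (a) A \emph{seed construction}: take an almost-pairwise-independent distribution over $[m]^t$, whose sample space has size $\text{poly}(m,t)/\epsilon^{O(1)}$, which fools each individual coordinate test well. (b) An \emph{amplification step}: take a random walk of length $\ell = O(3^t)$ on a constant-degree expander whose vertex set is the seed construction's sample space. By the expander hitting property (Ajtai--Komlós--Szemerédi / expander Chernoff), walks visiting a set of density $\mu$ miss it with probability $\leq (1-\Omega(\mu))^\ell$, so the set of walks hitting any fixed rectangle of density $3^{-t}$ has density close to $1$. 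A careful union bound over the $2^{O(mt)}$ maximal rectangles then shows that a single walk already works, and its description has length $\log(\text{seed size}) + O(\ell) = O(t) + O(\log m)$ bits, yielding a hitting set of the desired size $2^{O(t)}\cdot m^{O(1)}$, constructible in time of the same order by enumerating the walk's random choices.

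The main obstacle will be keeping the $2^t$ factor separated from the $m^{O(1)}$ factor in the derandomization: naive PRGs (Nisan's generator for small-space, or Nisan--Zuckerman) fool rectangle tests but with seed length $O(\log^2(m^t)) = O(t^2 \log^2 m)$, producing sample spaces of size $2^{\omega(t)}$. Achieving a linear-in-$t$ exponent requires exploiting the fact that a rectangle test has constant width as a branching program and that the $t$ coordinates behave independently. A fallback plan, if the expander-walk amplification above is too lossy, is a recursive block construction: build a hitting set $H_{t/2}$ for $t/2$ dimensions of size $2^{O(t/2)}\cdot m^{O(1)}$, then combine two independent copies by a pseudorandom pairing (again from an expander) so that every rectangle of density $3^{-t}$ splits favorably under some pair, giving the recurrence $S(t,m) \leq S(t/2,m)^2 \cdot \text{poly}(m)$ whose solution matches the target bound.
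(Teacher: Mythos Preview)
The paper does not prove this theorem at all: it is quoted verbatim from Linial, Luby, Saks, and Zuckerman~\cite{LLSZ97} and used as a black box (the only work the paper does around it is the easy reduction in Corollary~\ref{cor:LLSZhittingset} to rectangles with unequal side lengths). So there is no in-paper proof to compare your attempt against.

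That said, your derandomization sketch has real gaps that would prevent it from recovering the stated bound. In the expander-walk step you take $\ell = O(3^t)$ steps so that the miss probability $(1-\Omega(3^{-t}))^\ell$ becomes small; but then the description of a walk is $\log(\text{seed size}) + O(\ell)$ bits, which is $O(\log m) + O(3^t)$, not $O(t)+O(\log m)$ as you wrote. The resulting hitting set has size $m^{O(1)}\cdot 2^{O(3^t)}$, which is doubly exponential in $t$. Even worse, with $\ell=O(3^t)$ the miss probability is only a constant, so the union bound over $2^{\Theta(mt)}$ maximal rectangles still fails. Your fallback recursion $S(t)\le S(t/2)^2\cdot m^{O(1)}$ also does not solve to the target: unrolling gives $S(t)\le S(1)^{t}\cdot m^{O(t)} = m^{O(t)}$, i.e., the $m$-exponent blows up linearly in $t$ rather than staying $O(1)$. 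The actual LLSZ construction is considerably more delicate than either of these; if you want to supply a proof rather than cite the result, you should consult their paper directly.
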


We need a hitting set for combinatorial rectangles in a universe where
the coordinates are from domains of different sizes.
We show that Theorem~\ref{thm:LLSZhittingset} can be adapted to this setting.
\begin{cor}\label{cor:LLSZhittingset}
Suppose $m_1,\ldots,m_t \leq m$.
Then a hitting set $H\subseteq \prodRect$ for rectangles with sidewise density $1/2$
of size $|H|= 2^{O(t)}\cdot m^{O(1)}$ can be constructed
in time $2^{O(t)}\cdot m^{O(1)}$.
\end{cor}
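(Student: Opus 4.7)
\medskip

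The plan is to reduce to the uniform case of Theorem~\ref{thm:LLSZhittingset} by a simple padding-and-projection argument. Let $M \triangleq 3m$, and for each $i \in [t]$ define a surjection $\phi_i \colon [M] \to [m_i]$ whose fibers are as balanced as possible; concretely, $\phi_i(j) \triangleq \lceil j \cdot m_i / M \rceil$ works, and every $a \in [m_i]$ has preimage of size at least $\lfloor M/m_i \rfloor$. Invoke Theorem~\ref{thm:LLSZhittingset} with the ambient side length $M$ to obtain a hitting set $H' \subseteq [M]^t$ for rectangles of sidewise density $1/3$, of size and construction time $2^{O(t)} \cdot M^{O(1)} = 2^{O(t)} \cdot m^{O(1)}$. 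Finally, define the output hitting set
\[
H \triangleq \set{(\phi_1(x_1),\ldots,\phi_t(x_t)) \,:\, (x_1,\ldots,x_t) \in H'} \;\subseteq\; \prodRect.
\]

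To verify correctness, fix a combinatorial rectangle $R = R_1 \times \cdots \times R_t \subseteq \prodRect$ with sidewise density $1/2$, i.e., $|R_i| \geq m_i/2$ for every $i$. Consider the pulled-back rectangle $R' \triangleq \phi_1^{-1}(R_1) \times \cdots \times \phi_t^{-1}(R_t) \subseteq [M]^t$. Using the lower bound $\lfloor M/m_i \rfloor \geq M/m_i - 1$ on fiber sizes,
\[
|\phi_i^{-1}(R_i)| \;\geq\; |R_i| \cdot \lfloor M/m_i \rfloor \;\geq\; \frac{m_i}{2}\left(\frac{M}{m_i} - 1\right) \;=\; \frac{M}{2} - \frac{m_i}{2} \;\geq\; \frac{M}{3},
\]
where the last inequality uses $M = 3m \geq 3 m_i$. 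Thus $R'$ has sidewise density $1/3$ in $[M]^t$, so by the hitting property of $H'$ there exists some $(x_1,\ldots,x_t) \in H' \cap R'$, and then $(\phi_1(x_1),\ldots,\phi_t(x_t))$ lies in $H \cap R$, as required.

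The bounds on size and construction time follow immediately: $|H| \leq |H'| = 2^{O(t)} \cdot m^{O(1)}$, and producing $H$ from $H'$ costs only an additional $O(t \cdot |H'|)$ time to apply the $\phi_i$'s coordinatewise. There is no substantial obstacle here; the only minor point to get right is the choice of the padding factor, which must be large enough so that density $1/2$ in $[m_i]$ pulls back to density at least $1/3$ in $[M]$ even after the floor loss in fiber sizes, and any $M \geq 3m$ suffices for this.
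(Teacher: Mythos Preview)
Your proof is correct and follows essentially the same approach as the paper's: pad the side length to $3m$, project coordinatewise via a balanced surjection, and verify that a rectangle of sidewise density $1/2$ pulls back to one of sidewise density $1/3$ so that Theorem~\ref{thm:LLSZhittingset} applies. The only cosmetic difference is that the paper uses reduction modulo $m_i$ as the surjection $\phi_i$ (and accordingly reindexes $[a]$ as $\{0,\ldots,a-1\}$), whereas you use the ceiling map $j \mapsto \lceil j m_i/M \rceil$; the density calculation is otherwise identical.
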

\begin{proof}
For the purpose of the proof it will be convenient to redefine $[a]$ as $\set{0,\ldots,a-1}$.
Let $m' = 3m$. Define mapping $\pi\colon [m']^t \to \prodRect$ as follows:
$$\pi(a_1,a_2,\ldots,a_t)=(a_1\bmod m_1,a_2\bmod m_2,\ldots,a_t\bmod m_t).$$
Observe that if $R_i\subseteq [m_i]$ is such that $|R_i|\geq m_i/2$, and $R_i'\subseteq [m']$ is the set of all elements of $[m']$ whose remainders modulo $m_i$ belong to $R_i$, then $|R_i'|\geq m'/3$. This follows from the fact that $m'=3m\geq 3m_i$. Therefore, if $R\subseteq \prodRect$ is a combinatorial rectangle with sidewise density $1/2$, then $R'\triangleq \pi^{-1}(R)\subseteq [m']^t$ is a combinatorial rectangle with sidewise density $1/3$. Let $H'\subseteq [m']^t$ be the hitting set for combinatorial rectangles with sidewise density $1/3$ given by Theorem~\ref{thm:LLSZhittingset}. Moreover, let $H\triangleq \pi(H')$; hence we have that $|H|\leq |H'|=2^{O(t)}\cdot m^{O(1)}$. We have that $H'\cap R'\neq \emptyset$, so also $H\cap R=\pi(H')\cap \pi(R')\supseteq \pi(H'\cap R')\neq \emptyset$. Since $R$ was chosen arbitrarily, we infer that $H$ is a hitting set for combinatorial rectangles with sidewise density~$1/2$.
\end{proof}




\paragraph*{Pairwise independent families.} Another component in our construction is a family of $\eps$-pairwise independent functions.
\begin{dfn}\label{dfn:almostpairwise}
A family of functions $\H\oftype{[n]}{[m]}$
is $\eps$-pairwise independent if for all $x,y\in [n]$ and $a,b\in [m]$, it holds that
\[|\pr_{f\leftarrow \H} (f(x) = a \wedge f(y) = b) -1/m^2| \leq \eps.\]
\end{dfn}
The constructions of~\cite{NN93} and~\cite{AGHP92}
imply the following construction of an $\eps$-pairwise independent family of functions
(cf. \cite[Section 4]{AYZ08}).

\begin{thm}[\cite{AGHP92,NN93}]\label{thm:almostpairwise}
Fix any $m\leq n$. Then a $1/m^2$-pairwise independent family $\H\oftype{[n]}{[m]}$
of size $m^{O(1)}\cdot \log n$ can be constructed in time $O(m^{O(1)}\cdot n\cdot \log n)$.
\end{thm}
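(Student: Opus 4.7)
The plan is to derive Theorem~\ref{thm:almostpairwise} as a two-step consequence of standard pseudorandom generator constructions. First I would reduce to the case where $m$ is a power of $2$; if not, round it up, which loses only a constant factor in all parameters. Under this assumption, a function $f\colon [n]\to [m]$ is identified with a string in $\{0,1\}^N$ for $N\triangleq n\log m$, partitioned into $n$ blocks of $\log m$ bits with the $x$-th block encoding $f(x)$. The problem then becomes: construct a small sample space on $\{0,1\}^N$ whose marginal on any pair of blocks is $\ell_\infty$-close to the uniform distribution on $\{0,1\}^{2\log m}$.

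Next I would invoke the construction of $\epsilon$-biased sample spaces over $\{0,1\}^N$ from~\cite{AGHP92}. Recall that a distribution $\mathcal{D}$ on $\{0,1\}^N$ is $\epsilon$-biased if $|\E_{y\sim \mathcal{D}}[\prod_{i\in S}(-1)^{y_i}]|\leq \epsilon$ for every non-empty $S\subseteq [N]$. The relevant AGHP variant, based on concatenated Reed--Solomon--Hadamard codes, yields such a distribution of size $(\log N/\epsilon)^{O(1)}$, constructible in time proportional to its size. The standard Fourier-analytic conversion, essentially~\cite{NN93}, then gives the key fact: expanding the indicator of any fixed assignment to $k$ specific coordinates in the character basis and bounding each non-trivial Fourier coefficient by $\epsilon$ shows that the $k$-coordinate marginal of $\mathcal{D}$ is within $\epsilon\cdot 2^k$ of uniform in the $\ell_\infty$ norm. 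Applying this with $k=2\log m$, namely the bits encoding two values $f(x)$ and $f(y)$, and choosing $\epsilon=1/m^4$, yields $1/m^2$-closeness for each prescribed pair $(a,b)\in [m]^2$, hence $1/m^2$-pairwise independence of the family $\H$ obtained by reading off the blocks.

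The resulting size is $(\log N/\epsilon)^{O(1)}=m^{O(1)}\cdot \log n$, matching the statement; the construction time is dominated by enumerating the seeds of the $\epsilon$-biased generator and writing out each $f$, which contributes the extra factor of $n\cdot \log n$. The main subtlety I foresee is ensuring only a single $\log n$ factor in the final bound rather than a higher power of $\log n$; this reduces to choosing the AGHP variant whose seed length depends additively (not multiplicatively) on $\log N$ and $\log(1/\epsilon)$. This is a parameter-tracking exercise in~\cite{AGHP92,NN93} rather than a conceptual difficulty, which is presumably why the authors cite the result as a black box here.
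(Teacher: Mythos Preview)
Your outline is correct and follows the standard derivation from the cited references. Note, however, that the paper does not actually prove this theorem: it is stated as a black-box result imported from~\cite{AGHP92,NN93} (with a pointer to~\cite[Section~4]{AYZ08} for the precise formulation), so there is no ``paper's own proof'' to compare against. Your sketch --- encode functions as bit-strings, build an $\epsilon$-biased space, and use the Vazirani/Naor--Naor conversion to $k$-wise $\epsilon$-independence with $k=2\log m$ --- is exactly the route those references take, and your closing remark about needing the additive seed-length variant to keep a single $\log n$ factor is the right caveat.
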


We now use Theorem~\ref{thm:almostpairwise} to construct a small family that separates one element from $k$ other elements with non-negligible probability.

\begin{lemma}\label{lem:seperatingWithpairwise}
Fix integers $k,n$ with $1\leq k\leq n$. Then a family $\H_{k,4k}\oftype{[n]}{[4\cdot k]}$ that is $(1,k,4\cdot k)$-separating with
probability $1/2$ and has size $k^{O(1)}\cdot \log n$ can be constructed in time $k^{O(1)}\cdot n\cdot  \log n$.
\end{lemma}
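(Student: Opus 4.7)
The plan is to derive the lemma as an almost immediate corollary of the $\eps$-pairwise independent families provided by Theorem~\ref{thm:almostpairwise}. Since $|C|=1$, the injectivity condition in Definition~\ref{dfn:seperating} is vacuous, so we only need to argue that for any fixed $x \in [n]$ and any $D\subseteq [n]\setminus\set{x}$ of size at most $k-1$, a random $h$ from the family satisfies $h(x)\notin h(D)$ with probability at least~$1/2$.

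First I would dispose of the degenerate case $4k>n$: there we can simply output the single function that is the identity of $[n]$ padded into $[4k]$, which separates any $(C,D)$ with the required sizes. So from now on assume $4k\leq n$ and apply Theorem~\ref{thm:almostpairwise} with $m\triangleq 4k$ and $\eps\triangleq 1/m^2 = 1/(16k^2)$. This yields a $1/(16k^2)$-pairwise independent family $\H_{k,4k}\oftype{[n]}{[4k]}$ of size $k^{O(1)}\cdot \log n$, constructible in time $k^{O(1)}\cdot n\cdot\log n$, which meets the promised size and running-time bounds.

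Next I would bound collision probabilities. For any fixed pair $x\neq y$, by the definition of $\eps$-pairwise independence,
\[
\pr_{h\leftarrow \H_{k,4k}}[h(x)=h(y)] \;=\; \sum_{a=1}^m \pr[h(x)=a \wedge h(y)=a] \;\leq\; m\cdot\left(\tfrac{1}{m^2}+\eps\right) \;=\; \tfrac{1}{m}+m\eps \;\leq\; \tfrac{2}{m}.
\]
A union bound over $y\in D$, which has at most $k-1$ elements, then yields
\[
\pr_{h\leftarrow \H_{k,4k}}[h(x)\in h(D)] \;\leq\; (k-1)\cdot \tfrac{2}{4k} \;<\; \tfrac{1}{2}.
\]
Consequently $h$ separates $\set{x}$ from $D$ with probability strictly larger than $1/2$, which is exactly the $(1,k,4k)$-separating with probability $1/2$ property required.

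There is essentially no hard part here; the only care needed is in the constants, specifically in choosing the codomain size large enough (namely $4k$) so that the sum of the ``uniform'' contribution $1/m$ and the pairwise-independence defect $m\eps$ together fit comfortably below $1/(k-1)$ after summing over $D$. The edge case $4k>n$ is the only thing that prevents the direct application of Theorem~\ref{thm:almostpairwise}, but is handled trivially by using an injection.
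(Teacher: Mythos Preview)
Your proof is correct and follows essentially the same approach as the paper: both apply Theorem~\ref{thm:almostpairwise} with $m=4k$, bound the probability of a single collision by $2/m=1/(2k)$, and then sum over the at most $k-1$ elements of $D$ (the paper phrases this as Markov's inequality on $\sum X_j$, which is equivalent to your union bound). Your explicit handling of the case $4k>n$, which is needed for the hypothesis $m\leq n$ of Theorem~\ref{thm:almostpairwise}, is a detail the paper glosses over.
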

\begin{proof}
Let $\H\oftype{[n]}{[4k]}$ be the family of $1/(4k)^2$-pairwise independent functions, given by Theorem \ref{thm:almostpairwise}. 
Fix sets $C=\set{a}$ and $D=\set{b_1,\ldots,b_{k-1}}$ we
want to separate. For $j\in [k-1]$, let $X_j$ be a random variable
that is equal to one if $f(a) = f(b_j)$ and to zero otherwise, where $f$ is chosen uniformly at random from $\H$.
From the $1/4k$-pairwise independence of $\H$, we have that $\E (X_j)  = \pr(X_j=1) \leq 4k\cdot (1/(4k)^2 + 1/(4k)^2) \leq 1/(2k)$.
Let us define $X = \sum_{j=1}^{k-1} X_j$, so that we have $\E(X) \leq (k-1)/(2k) < 1/2$.
Note that $X$ is precisely the number of collisions between $C$ and $D$.
From Markov's inequality, we have that $\pr(X\geq 1) \leq 1/2$.
So with probability at least $1/2$ over the choice of $f\in \H$, $C$ and $D$ are separated by $f$.
\end{proof}

The next claim follows from the well-known theorem that the geometric mean of non-negative
numbers is never larger than their arithmetic mean.
\begin{proposition}\label{prop:restprod}
Let $k_1,\ldots,k_t$ be non-negative real numbers such
that $k_1+\ldots+k_t \leq k$.
Then $\prod_{i=1}^t k_i  \leq (k/t)^t$.
\end{proposition}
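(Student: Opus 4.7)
The plan is to invoke the AM--GM inequality directly, since the proposition is essentially a restatement of it, combined with the monotonicity of $x \mapsto x^t$ on the non-negative reals.

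First I would recall the standard AM--GM inequality: for non-negative reals $k_1,\ldots,k_t$, the geometric mean is at most the arithmetic mean, i.e.
\[
\left(\prod_{i=1}^t k_i\right)^{1/t} \;\leq\; \frac{1}{t}\sum_{i=1}^t k_i.
\]
Raising both sides to the $t$-th power (valid because both sides are non-negative and $x \mapsto x^t$ is monotone non-decreasing on $[0,\infty)$) gives
\[
\prod_{i=1}^t k_i \;\leq\; \left(\frac{1}{t}\sum_{i=1}^t k_i\right)^t.
\]
Then I would use the hypothesis $\sum_{i=1}^t k_i \leq k$, together with monotonicity again, to bound the right-hand side by $(k/t)^t$, which yields the claim.

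There is no real obstacle here: the statement is a one-line consequence of a classical inequality. The only thing worth being mildly careful about is the case where some $k_i = 0$ (the product is then $0$ and the inequality is trivial) and the degenerate case $t = 0$ (vacuously true or excluded by context), but otherwise the argument is immediate. No auxiliary machinery from earlier in the paper is needed.
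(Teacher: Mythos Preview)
Your proposal is correct and matches the paper's approach exactly: the paper does not give a detailed proof either, merely stating that the claim follows from the AM--GM inequality.
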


\paragraph*{The main construction.} We are ready to proceed to the main construction.
\begin{proof}[Proof of Theorem~\ref{thm:sepratingmain}]
Fix disjoint subsets $C,D\subseteq [n]$ with $|C|=t$ and $|D|\leq k-t$.
Recall that we want to construct a family of functions $\H\oftype{[n]}{[t+1]}$,
such that for any $C$ and $D$ as above, there exists $h\in \H$
that separates $C$ from $D$.
It will be convenient to present the family by constructing $h$
adaptively given $C$ and $D$. That is, for arbitrarily chosen $C$ and $D$, we will adaptively construct a function $h$ that separates $C$ from $D$. Function $h$ will be constructed by taking a number of {\em{choices}}, where each choice is taken among a number of possibilities. The final family $\H$ will comprise all $h$ that can be obtained using any such sequence of choices; thus, the product of the numbers of possibilities will limit the size of $\H$. As $C$ and $D$ are taken arbitrarily, it immediately follows that such $\H$ separates every pair $(C,D)$. 

\begin{enumerate}
\item Let $\H_0\oftype{[n]}{[k^2]}$ be the $k$-perfect hash family given by
Theorem~\ref{thm:hash_to_k^2}. Choose $f_0 \in \H_0$ that is injective on $C\cup D$ --- there are $k^{O(1)}\cdot \log n$ choices for this stage.

From now on, we identify $C$ and $D$ with their images in $[k^2]$ under $f_0$.
\item Let $\H_1\oftype{[k^2]}{[t]}$ be the $t$-perfect hash family given by Theorem~\ref{thm:NSShash}. Choose a function $f_1\in \H_1$ that is injective on $C$ --- there are $e^{t+ O(\log^2 t)}\cdot \log k$ choices for this stage.

For $i\in [t]$, we denote from now on by $c_i$ the (unique) element of $C$ with $f_1(c_i) = i$. Moreover, elements mapped to $i$ under $f_1$ will be denoted by $\B_i$, and we will think of them as the {\em{$i$-th bucket}}.
\item Choose non-negative integers $k_1,\ldots, k_t$ such that $k_i$ is the number of elements
$a\in D$ with $f_1(a) = i$. Note that $k_1+\ldots + k_t \leq k-t$,
 so the number of choices for this stage is at most $\binom{k}{t} \leq (ek/t)^t$.
\item For $i\in [t]$, let $\H_{k_i,4k_i}\oftype{[k^2]}{[4\cdot k_i]}$ be family given by Lemma~\ref{lem:seperatingWithpairwise}. That is, $\H_{k_i,4k_i}$ is
$(1,k_i,4\cdot k_i)$-separating with
probability $1/2$ and has size $m_i=k_i^{O(1)}\cdot \log k$.

Let $R$ be the set of all vectors $(h_1,\ldots,h_t)$ such that
for all $i\in [t]$, $h_i$ is an element of $\H_{k_i,4k_i}$
that separates $c_i$ from $D_i$.
Identify $[m_i]$ with $\H_{k_i,4k_i}$ by ordering the functions in $\H_{k_i,4k_i}$ arbitrarily.
Observe that $R$ is a combinatorial rectangle of sidewise density $1/2$
in $\prodRect$. Note that for all $i\in [t]$, $m_i\leq m$ for some $m=k^{O(1)}$.
Let $H$ be the hitting set for combinatorial rectangles with sidewise density
$1/2$ given by Corollary~\ref{cor:LLSZhittingset}.
Choose an element $(h_1,\ldots,h_t) \in H\cap R$. As $|H|=2^{O(t)}\cdot k^{O(1)}$, there are $2^{O(t)}\cdot k^{O(1)}$ choices for this stage.

For $i\in [t]$, apply $h_i$ to partition bucket $\B_i$ into buckets $\B_{i,1},\ldots,\B_{i,4k_i}$; that is, element $a\in [n]$ is put into bucket $\B_{f_1(a),h_{f_1(a)}(a)}$.
Since $(h_1,\ldots,h_t) \in R$, we thus have that each $c_i$ is mapped to a separate bucket from all elements of $D_i$.
\item For each $i\in [t]$, choose the unique index $j_i\in [4k_i]$ such
that $c_i$ was mapped to bucket $\B_{i,j_i}$.
Construct the final function $h$ by mapping all the elements of $\B_{i,j_i}$ to $i$, and mapping all the elements of
$\B_{i,j'}$ for $j'\neq j_i$ to $t+1$, for all $i\in [t]$. The number of choices for this stage is
$4^t\cdot \prod_{i=1}^t k_i \leq (4k/t)^t$,
where the inequality follows from Proposition~\ref{prop:restprod}.
\end{enumerate}

To sum up, the number of different functions enumerated in all the above stages is at most
\[k^{O(1)}\cdot \log n\cdot e^{t+ O(\log^2 t)}\cdot \log k\cdot (ek/t)^t\cdot 2^{O(t)}\cdot k^{O(1)}\cdot (4k/t)^t
=(k/t)^{2t}\cdot 2^{O(t)}\cdot k^{O(1)} \cdot \log n\]
\[ =\O((k/t)^{2t}\cdot 2^{O(t)}\cdot \log n).\]
Similarly, going over the construction times of the different components used in the above stages,
we get that the family can be constructed in time
\[O(k^{O(1)}\cdot n\cdot \log n\cdot e^{t+ O(\log^2 t)}\cdot k\cdot \log k\cdot (ek/t)^t\cdot 2^{O(t)}\cdot k^{O(1)}\cdot (4k/t)^t)\]
\[ = O((k/t)^{2t}\cdot 2^{O(t)}\cdot k^{O(1)} \cdot n\cdot \log n)=\O((k/t)^{2t}\cdot 2^{O(t)}\cdot n\cdot \log n).\]\end{proof}

\paragraph*{Lopsided universal sets.} 
Informally speaking, an \emph{\lops{p}{q}} is a set of strings such that, when focusing on any $k\triangleq p+q$ locations, we see all patterns of hamming weight $p$. 
%
%
%
Universal set families, and in particular the lopsided ones, have important applications in the determinization of parameterized algorithms that use the technique of color coding; cf.~\cite{FLS14}.
\begin{dfn}\label{dfn:lopsided}
An \lops{p}{q} is a family of subsets $\F \subseteq [n]$ such that for every disjoint subsets $A,B \subseteq [n]$ with $|A|=p$ and $|B|=q$ there exists $F\in \F$ with $A\subseteq F$ and $F\cap B = \emptyset$.
\end{dfn}

A probabilistic argument shows the existence of \lopss{p}{q} of size $\binom{k}{p}\cdot \sqrt{p\cdot q \cdot k} \cdot \log n$ (cf. Lemma 52 in \cite{Bshouty14}). Fomin, Lokshtanov and Saurabh \cite{FLS14} used the technique of \cite{NSS95} to construct an \lops{p}{q} of size $\binom{k}{p}\cdot 2^{O\left(\frac{k}{\log\log(k)}\right)}\cdot \log n.$
Consider the case $p=o(q)$: For simplicity of discussion, let us focus on the  $\binom{k}{p}\leq (e\cdot k/p)^p$ term in the probablistic construction and omit $\log n$ terms. In this case the $2^{O\left(\frac{k}{\log\log(k)}\right)}$ term of $\cite{FLS14}$ is much larger than $\binom{k}{p}$. Bshouty \cite{Bshouty14} gives a better construction for such $p$ achieving size roughly $k^{p+2}$ (see Lemma 53 in \cite{Bshouty14}). We remark that Bshouty~\cite{Bshouty14} referred to this object as a cover free family. As a corollary of the last section we obtain an explicit construction of size $(k/p)^{O(p)}$. 
%
%
\begin{thm}\label{thm:lopsided_main}
Fix integers $p,q$ and $n$ such that $k=p+q \leq n$.
Then an \lops{p}{q} of size $ \O((k/p)^{2\cdot p}\cdot 2^{O(p)}\cdot \log n)$
can be constructed in time $ \O((k/p)^{2\cdot p}\cdot 2^{O(p)}\cdot n\cdot \log n)$.
\end{thm}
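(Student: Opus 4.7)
The plan is to derive Theorem~\ref{thm:lopsided_main} as a direct corollary of Theorem~\ref{thm:sepratingmain} on minimal separating families. The key observation is that the definition of an \lops{p}{q} and that of a $(t,k)$-minimal separating family with $t=p$ and $k=p+q$ match almost verbatim, once we interpret a function $h\colon[n]\to[t+1]$ as the set $h^{-1}([t])\subseteq [n]$.

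Concretely, I would apply Theorem~\ref{thm:sepratingmain} with the parameters $t\triangleq p$ and $k\triangleq p+q$, obtaining a $(p,p+q)$-minimal separating family $\H\oftype{[n]}{[p+1]}$. From $\H$, I define
\[
\F \triangleq \set{h^{-1}([p]) : h \in \H}.
\]
Now fix any disjoint sets $A,B \subseteq [n]$ with $|A|=p$ and $|B|=q$. Setting $C=A$ and $D=B$, we have $|C|=t$ and $|D| = q = k - t$, so by the separating property there is some $h\in \H$ with $h(A)=[p]$ and $h(B)\subseteq \set{p+1}$. Let $F = h^{-1}([p])\in \F$. From $h(A)=[p]$ we get $A\subseteq F$, and from $h(B)\subseteq \set{p+1}$ we get $F\cap B = \emptyset$. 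Thus $\F$ is an \lops{p}{q}.

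For the size and time bounds, Theorem~\ref{thm:sepratingmain} (applied with the parameters above) produces $\H$ of size $\O((k/t)^{2t}\cdot 2^{O(t)}\cdot \log n) = \O((k/p)^{2p}\cdot 2^{O(p)}\cdot \log n)$ in time $\O((k/t)^{2t}\cdot 2^{O(t)}\cdot n\cdot \log n) = \O((k/p)^{2p}\cdot 2^{O(p)}\cdot n\cdot \log n)$, and constructing $\F$ from $\H$ takes time linear in $|\H|\cdot n$, which is absorbed in the same bound. This yields exactly the claim.

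There is essentially no obstacle here: the only thing to verify is that the parameter regime needed ($1\leq t\leq \min(n,k)$) is satisfied, which follows from the hypothesis $k=p+q\leq n$ and $p\geq 1$ (the degenerate case $p=0$ is trivial, as $\F=\set{\emptyset}$ works). The whole proof is a handful of lines and reduces entirely to invoking Theorem~\ref{thm:sepratingmain}.
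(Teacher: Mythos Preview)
Your proposal is correct and matches the paper's own proof essentially verbatim: the paper also applies Theorem~\ref{thm:sepratingmain} with $t=p$, sets $\F=\{h^{-1}([p]):h\in\H\}$, and observes that the definition of a minimal separating family immediately yields the lopsided universal set property with the stated size and time bounds.
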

\begin{proof}
Let $\H\oftype{[n]}{[p+1]}$ be the $(p,k)$-minimal separating family of size $\O((k/p)^{2p}\cdot 2^{O(p)}\cdot \log n)$
given by Theorem~\ref{thm:sepratingmain}.
For each $h\in \H$, construct set $F_h=h^{-1}([p])$, and let $\F\triangleq \set{F_h\mid h\in \H}$.
The definition of a minimal separating family immediately implies that $\F$ is an \lops{p}{q}, and the time needed for the construction follows from Theorem~\ref{thm:sepratingmain}.
\end{proof}


\section{Algorithmic applications}\label{sec:algapp}

In this section we present applications of our construction of representative sets for multisets to the design of deterministic algorithms for problems with relaxed disjointness constraints. We first give some useful auxiliary tools, and then present algorithms for three problems: \rsimp, \rpacking, \rmondetect. Let us remark that the algorithms for the first two problems follow from the algorithm for the (more general) last problem. Nevertheless, we find it more didactic to present the applications in increasing order of difficulty.
Throughout this section, $r$ always denotes an integer in the range $1<r<k$.

\subsection{Algorithmic preliminaries}

The following simple observation will be used in all of
our algorithmic applications.
\begin{lemma}\label{lem:repsetnotempty}
Let $\P$ be a family of \rksets that constains a multiset $P$ of size $k$
and weight $w$.
Then any $\rep{\P}\subseteq \P$ that represents $\P$ contains a
multiset $P'$ of size $k$ and weight $w'\leq w$.
\end{lemma}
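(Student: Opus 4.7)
The plan is to apply the definition of representativity (Definition~\ref{dfn:representativeset}) to a carefully chosen ``query'' multiset $Q$. The natural choice is to take $Q$ to be the empty multiset, viewed as the zero vector in $\rr^n$. This will force compatibility witnesses on the $\P$ side to coincide with $P$ itself, and will force compatibility witnesses on the $\rep{\P}$ side to have size exactly $k$, which is what we want to extract.

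Concretely, I would argue as follows. Let $Q$ be the empty \rkset, so $|Q| = 0$ and $Q_i = 0$ for all $i \in [n]$. Then $P + Q = P$, which is an \rkset by assumption, and $|P + Q| = |P| = k$. Hence $P$ and $Q$ are \rkcomp in the sense of the definition preceding Definition~\ref{dfn:multiseparator}. By the assumption that $\rep{\P}$ represents $\P$ and that $P \in \P$ has weight $w$, Definition~\ref{dfn:representativeset} guarantees the existence of some $P' \in \rep{\P}$ of weight $w' \leq w$ that is \rkcomp with $Q$. Unfolding \rkcomp once more, this means $|P' + Q| = k$, and since $Q$ is empty, $|P'| = k$, as required.

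The argument is a one-liner once the definitions are unpacked, so there is no genuine obstacle; the only thing to be careful about is verifying that the empty multiset is a legal choice of query in Definition~\ref{dfn:representativeset}, which it is since any \rkset is allowed (including the all-zero one). The lemma will be useful in the algorithmic applications because it translates ``there exists a solution of weight $w$ in $\P$'' into ``there exists a solution of weight $w' \leq w$ in $\rep{\P}$'', thereby letting us safely shrink $\P$ to the much smaller representative family produced by Corollary~\ref{cor:repset} without losing the optimal solution.
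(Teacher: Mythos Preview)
Your proof is correct and is essentially identical to the paper's own argument: the paper also takes the empty multiset as the query, observes that $P$ is \rkcomp with it, and concludes from Definition~\ref{dfn:representativeset} that some $P'\in\rep{\P}$ of weight at most $w$ and size $k$ exists.
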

\begin{proof}
 Take $B=\emptyset$.
 Then $P$ and $B$ are \rkcomp.
 So there must exist $P'\in \rep{\P}$ that
 is \rkcomp with $B$ and has weight at most $w$.
 As $|P'+B|=k$, we have $|P'|=k$.
\end{proof}

The following simple fact is immediate and will be used implicitly.
\begin{proposition}
If $\A''$ represents $\A'$ and $\A'$ represents $\A$, then $\A''$ represents $\A$.
\end{proposition}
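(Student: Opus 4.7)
The plan is to verify the three requirements of Definition~\ref{dfn:representativeset} (subfamily containment, existence of a representative multiset, and weight monotonicity) by straightforwardly composing the two representation relations. So we assume $\A''$ represents $\A'$ and $\A'$ represents $\A$, and we want to show $\A''$ represents $\A$.

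First, observe that the subfamily condition composes: since $\A''\subseteq \A'$ and $\A'\subseteq \A$, we immediately have $\A''\subseteq \A$, as required by Definition~\ref{dfn:representativeset}. Next, fix an arbitrary $(r,k)$-set $Q$ and suppose there is some $P\in\A$ of weight $w$ that is $(r,k)$-compatible with $Q$. I would chain the two representation properties: applying the assumption that $\A'$ represents $\A$ to $P$ and $Q$ yields a multiset $P'\in\A'$ of weight $w'\leq w$ that is $(r,k)$-compatible with $Q$. Then applying the assumption that $\A''$ represents $\A'$ to $P'$ and the same $Q$ yields a multiset $P''\in\A''$ of weight $w''\leq w'$ that is $(r,k)$-compatible with $Q$. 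Transitivity of ``$\leq$'' then gives $w''\leq w'\leq w$, which is exactly the inequality required by Definition~\ref{dfn:representativeset}.

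There is no real obstacle here; the statement is a pure transitivity property of the relation ``represents,'' and the only thing to be careful about is keeping track of the fact that the weight bound passes through both steps (since we only need $w''\leq w$, not equality), and that the same $Q$ is used in both invocations. The proof is essentially a one-line composition, so in the write-up I would keep it to a single short paragraph rather than separating the two applications into displayed steps.
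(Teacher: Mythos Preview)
Your proof is correct and follows exactly the natural transitivity argument. In fact, the paper does not give any proof at all for this proposition: it is stated as an immediate fact and used implicitly, so your short direct verification of the subfamily condition and the weight-monotone chaining is precisely what is intended.
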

We now show that representative sets behave robustly under union of families.
\begin{lemma}\label{lem:repsetsunion}
Suppose $\A$ and $\B$ are two weighted families of $\rksets$, and suppose further that $\rep{\A}$ represents $\A$ and $\rep{\B}$ represents $\B$.
Then $\rep{\A}\cup \rep{\B}$ represents $\A\cup \B$.
\end{lemma}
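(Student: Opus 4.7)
The plan is to verify the condition of Definition~\ref{dfn:representativeset} directly for the family $\rep{\A}\cup\rep{\B}$ by a straightforward case distinction on which of the two given families a witness multiset comes from. The key observation is that the representation property is stated pointwise: for every multiset $Q$ and every witness $P$ in the big family, one must produce a witness $P'$ in the representative family. So the union on the right splits cleanly into two independent subcases, each of which is handled by the hypothesis on the corresponding representative subfamily.

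Concretely, I would fix an arbitrary \rkset $Q$ and suppose that some $P\in \A\cup\B$ of weight $w$ is \rkcomp with $Q$. In the first case $P\in\A$; since $\rep{\A}$ represents $\A$, there is some $P'\in\rep{\A}$ of weight $w'\leq w$ that is \rkcomp with $Q$. In the second case $P\in\B$; symmetrically, there is some $P'\in\rep{\B}$ of weight $w'\leq w$ that is \rkcomp with $Q$. Either way $P'\in \rep{\A}\cup\rep{\B}$, which is exactly what the definition requires. Since $Q$ and $P$ were arbitrary, the proof is complete.

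There is no real obstacle here; the lemma is essentially a sanity check that the quantifier structure of Definition~\ref{dfn:representativeset} is ``existential on the representative side,'' so representation commutes with unions on the represented side. The only mild care needed is to notice that the weight bound $w'\leq w$ is preserved automatically from the hypothesis in each case, so no comparison between elements taken from $\rep{\A}$ and from $\rep{\B}$ is ever required.
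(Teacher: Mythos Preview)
Your proof is correct and is essentially identical to the paper's own argument: both fix an arbitrary \rkset, take a witness in $\A\cup\B$, split into cases according to whether it lies in $\A$ or in $\B$, and invoke the corresponding representation hypothesis to produce a witness in $\rep{\A}\cup\rep{\B}$ of no larger weight. Only the variable names differ.
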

\begin{proof}
Take any \rkset $D$ such that there exists $C \in \A\cup \B$ that is \rkcomp with $D$. If $C\in \A$, then there exists $C'\in \rep{\A}$ with $\wt(C')\leq \wt(C)$ such that $C'$ is \rkcomp with $D$. Then also $C'\in \rep{A}\cup \rep{B}$ and we are done. The reasoning for the case when $C\in \B$ is symmetric.
\end{proof}
Finally, we introduce operation $\bullet$ that extends the considered weighted family of \rksets. This is the crucial ingredient in the classic technique of iteratively extending the current family by a new object, and then shrinking the size of the family by taking a representative subfamily.
\begin{dfn}
We say that \rksets $A$ and $B$ are \emph{\rkcons},
if $|A+B|\leq k$ and $A_i+B_i\leq r$ for all $i\in [n]$.
(Note that this is a weaker definition than being \rkcomp, where it is required that $|A+B|=k$.)

\noindent Let $\A$ and $\B$ be weighted families of \rksets. We define the weighted family
$\A\bullet \B$ as follows:
$$\A\bullet \B \triangleq \set{A+B \mid A\in \A, B\in \B, A \textrm{ and } B \textrm{ are \rkcons}}.$$
In particular,
for an element $i\in [n]$,
we denote 
$$\A\bullet i\triangleq \set{A+\set{i}\mid A\in \A, |A|<k, A_i<r}.$$
\end{dfn}
\begin{lemma}\label{lem:bulletrep}
Suppose $\A$ and $\B$ are two weighted families of $\rksets$, and suppose further that $\rep{\A}$ represents $\A$ and $\rep{\B}$ represents $\B$.
Then $\rep{\A}\bullet \rep{\B}$ represents $\A\bullet \B$.
\end{lemma}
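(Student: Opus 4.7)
The proof is a standard two-step swapping argument, replacing the $\A$-component and then the $\B$-component of a witness by a representative from each family. The only nontrivial bookkeeping is checking at each step that the partial sums being formed remain legitimate $(r,k)$-sets, so that the representativity hypotheses actually apply.

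Concretely, the plan is as follows. Fix an \rkset $Q$ and suppose there is some $C\in \A\bullet \B$ of weight $w$ which is \rkcomp with $Q$. By definition of the $\bullet$ operation, we can write $C=A+B$ for some \rkcons $A\in \A$ and $B\in \B$ with $\wt(A)+\wt(B)=\wt(C)$. Since $C$ is \rkcomp with $Q$, the sum $A+B+Q$ is an \rkset of total size exactly $k$. In particular $B+Q$ is also an \rkset (each coordinate is bounded by $r$ since it is dominated by that of $A+B+Q$, and $|B+Q|\le k$), and $A$ is \rkcomp with $B+Q$ because $A+(B+Q)=A+B+Q$ has size $k$ and is an \rkset.

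Now apply the assumption that $\rep{\A}$ represents $\A$ to the \rkset $B+Q$: we obtain some $A'\in \rep{\A}$ with $\wt(A')\leq \wt(A)$ such that $A'$ is \rkcomp with $B+Q$. In particular $A'+B+Q$ is an \rkset of size $k$, so analogously $A'+Q$ is an \rkset and $B$ is \rkcomp with it. Applying the assumption that $\rep{\B}$ represents $\B$ to $A'+Q$, we obtain some $B'\in \rep{\B}$ with $\wt(B')\leq \wt(B)$ such that $B'$ is \rkcomp with $A'+Q$. Thus $A'+B'+Q$ is an \rkset of size $k$.

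It remains to package $A'$ and $B'$. Since $A'+B'$ is a submultiset of $A'+B'+Q$, we have $(A'+B')_i\le r$ for every $i$ and $|A'+B'|\le k$, so $A'$ and $B'$ are \rkcons. Hence $C'\triangleq A'+B'$ belongs to $\rep{\A}\bullet \rep{\B}$, and $\wt(C')=\wt(A')+\wt(B')\leq \wt(A)+\wt(B)=\wt(C)$. Finally, $C'+Q=A'+B'+Q$ is an \rkset of size $k$, so $C'$ is \rkcomp with $Q$. As $Q$ was arbitrary, $\rep{\A}\bullet \rep{\B}$ represents $\A\bullet \B$. The only subtle point throughout is the verification that $B+Q$ and $A'+Q$ satisfy both the coordinate bound $r$ and the size bound $k$ required to qualify as \rksets; both follow immediately from the fact that they sit inside a larger \rkset of size exactly~$k$.
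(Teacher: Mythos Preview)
Your proof is correct and follows essentially the same two-step swapping argument as the paper: first replace $A$ by $A'\in\rep{\A}$ using \rkcomp with $B+Q$, then replace $B$ by $B'\in\rep{\B}$ using \rkcomp with $A'+Q$, and conclude that $A'+B'\in\rep{\A}\bullet\rep{\B}$ is \rkcomp with $Q$. If anything, you are more explicit than the paper in verifying that the intermediate sums $B+Q$ and $A'+Q$ are legitimate \rksets, which the paper leaves implicit.
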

\begin{proof}
Fix an \rkset $D$ such that
there exists $C \in \A\bullet \B$
that is \rkcomp with $D$.
Hence, we have that $C =A + B$ for
some $A\in \A$ and $B\in \B$ that are \rkcons.
In particular, $A$ is \rkcomp with $B+ D$.
Hence, there exists $A' \in \rep{A}$ with $\wt(A')\leq \wt(A)$ that is \rkcomp with $B+D$.
In other words, $A'+B+D$ is an \rkset of size $|A'+B+D|=k$.
It follows that $B$ is \rkcomp with $A'+D$.
Hence, there exists $B'\in \rep{B}$ with $\wt(B')\leq \wt(B)$ that is \rkcomp with $A+D$.
Again, for this $B'$ we have that $A'+B'+D$ is an \rkset of size
$k$.
Therefore, $A'+B'$ is an \rkset of size at most $k$ and weight not larger than the weight of $A+B$.
In other words, $A'$ and $B'$ are \rkcons, and therefore $A'+B'\in \A'\bullet \B'$.
As $A'+B'$ is \rkcomp with $D$ and $\wt(A'+B')\leq \wt(A+B)$, we are done.
\end{proof}

\subsection{\rsimp}

We first introduce some notation for defining the \rsimp problem.
Let $G$ be a directed graph on $n$ vertices.
A \emph{$k$-path} in $G$ is a \emph{simple} walk
$P=v_1\to v_2\to \ldots \to v_k$ in $G$, that is, all vertices traversed by the walk are pairwise different
An \emph{\rkpath} is a walk
$P=v_1\to v_2\to \ldots \to v_k$ in $G$
such that no vertex is repeated more than $r$ times. In case the vertex set of $G$ is equipped with a weight function, the weight of an \rkpath $P$ is defined as the weight of the multiset of vertices traversed by $P$, where each vertex is taken the number of times equal to the number of its visits on $P$. 

\defparproblemu{\rsimp}{Directed graph $G=(V,E)$, weight function $\wt\colon V\to {\mathbb R}$, integers $r$ and $k$.}{$r,k$}{Determine whether there exists an \rkpath in $G$, and if so then return one of minimal possible weight.}

%
%
%

\begin{thm}\label{thm:rsimpalg}
\emph{\rsimp} can be solved in deterministic time
$\O(\repcomputenon \cdot n^3\cdot \log n)$.
\end{thm}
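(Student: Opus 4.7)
The plan is to run a standard dynamic programming over prefixes of walks, parametrized by a length $i \in [k]$ and a final vertex $u \in V$, where we keep track of the multiset footprint of the walk rather than its exact sequence of vertices. For each $(i,u)$ let $\P_i^u$ be the weighted family of all \rksets $P \in \rr^n$ that arise as the vertex footprint (with multiplicities) of some walk $v_1 \to v_2 \to \ldots \to v_i$ in $G$ with $v_i = u$. Because a walk is an \rkpath precisely when its footprint is an \rkset of size $k$, and because the footprint's weight equals the walk's weight, $G$ admits an \rkpath of weight $w$ if and only if some $\P_k^u$ contains a multiset of size $k$ and weight $w$. Hence, solving \rsimp reduces to computing $\min_{u}\min\{\wt(P) : P \in \P_k^u,\ |P|=k\}$ and reconstructing a witness.

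Of course the families $\P_i^u$ are exponentially large, so we will never store them explicitly. Instead, we maintain representative subfamilies $\rep{\P_i^u} \subseteq \P_i^u$ of size at most $\repsize$, using Corollary~\ref{cor:repset} as a black box. The base case is $\rep{\P_1^u} = \{\{u\}\}$, which trivially represents $\P_1^u = \{\{u\}\}$. For the inductive step we observe the natural recurrence
\[
\P_{i+1}^u \;=\; \bigcup_{v : (v,u)\in E} \P_i^v \bullet u,
\]
and we claim that the family
\[
\P'_{i+1,u} \;\triangleq\; \bigcup_{v : (v,u)\in E} \rep{\P_i^v} \bullet u
\]
represents $\P_{i+1}^u$. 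Indeed, since $\{\{u\}\}$ trivially represents itself, Lemma~\ref{lem:bulletrep} gives that $\rep{\P_i^v}\bullet u$ represents $\P_i^v \bullet u$ for every predecessor $v$ of $u$, and then Lemma~\ref{lem:repsetsunion} yields that their union represents the union. Feeding $\P'_{i+1,u}$ to the algorithm of Corollary~\ref{cor:repset} produces a subfamily $\rep{\P_{i+1}^u}$ of size $\repsize$ that represents $\P'_{i+1,u}$, hence also $\P_{i+1}^u$. At the end, Lemma~\ref{lem:repsetnotempty} guarantees that, for every $u$, every multiset $P\in\P_k^u$ of size $k$ and weight $w$ is witnessed by some $P'\in\rep{\P_k^u}$ of size $k$ and weight at most $w$; so scanning $\bigcup_u \rep{\P_k^u}$ for the minimum-weight multiset of size exactly $k$ yields the correct optimum. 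The witnessing walk is recovered by the usual back-pointer bookkeeping: when inserting $P'\in\rep{\P_{i+1}^u}$, we store the pair $(v,Q)$ with $Q \in \rep{\P_i^v}$ from which $P'$ was obtained via $\bullet u$.

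For the running time, we precompute once a $(2k/r,k)$-minimal separating family, and hence the \rksep from Corollary~\ref{cor:multisetsepmain}, spending time $\O(\sepcompute)$ up front. For a single pair $(i+1,u)$, the intermediate union $\P'_{i+1,u}$ has size at most $n\cdot\repsize$, and trimming it via Corollary~\ref{cor:repset} costs $\O(n\cdot\repsize\cdot\repsize \cdot n)$ time for the pairwise comparisons, plus the (amortized) cost of the separator construction. Multiplying by $k$ levels and $n$ vertices, and telescoping the separator construction cost into the first step, we obtain total time $\O(n^3 \cdot \repcomputenon \cdot \log n)$ as claimed.

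The only real subtlety is verifying that the representative-set property survives a full $k$-step induction; this is handled cleanly by the composition lemmas \ref{lem:bulletrep} and \ref{lem:repsetsunion}, which are tailored exactly for this purpose. Everything else (the DP structure, the translation between walks and multiset footprints, and the final back-pointer reconstruction of an optimal \rkpath) is routine, so the main expository work is in giving a concise correctness argument via these lemmas and verifying the running-time bookkeeping.
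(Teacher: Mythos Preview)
Your proposal is correct and follows essentially the same approach as the paper's own proof: a dynamic program over pairs $(i,u)$ maintaining representative subfamilies $\rep{\P}_{i,u}$ of the vertex-footprint families, with the recurrence $\P_{i+1}^u = \bigcup_{(v,u)\in E} \P_i^v \bullet u$, combined via Lemmas~\ref{lem:bulletrep} and~\ref{lem:repsetsunion}, trimmed by Corollary~\ref{cor:repset}, and finished off with Lemma~\ref{lem:repsetnotempty}. The only cosmetic differences are that you explicitly invoke Lemma~\ref{lem:repsetsunion} for the union step (the paper folds this into a one-line observation) and that you spell out the back-pointer reconstruction, which the paper leaves implicit.
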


\begin{proof}
Identify the vertex set $V$ with $[n]$. For $u\in [n]$ and $i\in [k]$, let
$\P_{i,u}$ denote the set of \rpaths in $G$ that have length $i$
and end in $u$.
Note that an element $P$ of such a set $\P_{i,u}$
can be viewed as an \rkset when we ignore the
order of vertices in the path.
We can thus view the sets $\P_{i,u}$ as
families of \rksets.
We will compute, using a dynamic programming algorithm, for each $u\in [n]$ and $i\in [k]$,
a set $\rep{\P}_{i,u}\subseteq \P_{i,u}$ that represents
$\P_{i,u}$ and has size at most $\repsize$. By Lemma~\ref{lem:repsetnotempty}, a minimal-weight \rkpath in $G$ can be recovered by taking a minimum-weight multiset among families $\rep{\P}_{k,u}$, for $u\in [n]$. In particular, if all these sets are empty, then no \rkpath exists in $G$. 
Thus, such an algorithm can be used to solve \rsimp.

We proceed with the algorithm description.
For $i=1$ and $u\in [n]$, family $\rep{\P}_{1,u}$ simply contains only the length-one path $u$.
Assume that we have computed families $\rep{\P}_{i,u}$ for every $u\in [n]$.
We describe the computation of $\rep{\P}_{i+1,v}$ for a fixed $v\in [n]$,
together with its running time.

\begin{enumerate}
\item We compute $\P'\triangleq  \bigcup_{(u,v)\in E} \rep{\P}_{i,u} \bullet v$.
Clearly, we have that $|\P'| \leq n\cdot \repsize $.
We now claim that $\P'$ represents $\P_{i+1,v}$.
Note first that $\P_{i+1,v} = \bigcup_{(u,v)\in E} \P_{i,u} \bullet v$.
By Lemma~\ref{lem:bulletrep}, for any $u\in [n]$, $\rep{\P}_{i,u}\bullet v$
represents $\P_{i,u} \bullet v$.
Therefore, $\P' = \bigcup_{(u,v)\in E} \rep{\P}_{i,u} \bullet v$ represents
$\bigcup_{(u,v)\in E} \P_{i,u} \bullet v = \P_{i+1,v}$.

Computing $\P'$ directly from the definition requires time $\O(\sum _{u\in [n]} |\rep{\P}_{i,u}|) =
\O(n \cdot \repsiz)$.

\item Now, using the algorithm of Corollary~\ref{cor:repset}, compute a set $\rep{\P}_{i+1,v}$ that represents $\P'$ and has size $|\rep{\P}_{i+1,v}| = \repsize$.
    This takes time $\O(|\P'|\cdot\sepcompute)= \O(n^2 \cdot \repcomputenopoly)$.
\end{enumerate}

\noindent Since during each of $k$ steps of the algorithm, this procedure is applied to every vertex $v\in [n]$, we conclude that the running time of the algorithm is $\O(\repcomputenon \cdot n^3\cdot \log n)$, as claimed.
\end{proof}

We have considered the {\sc $r$-Simple $k$-Path} problem where the weights are on the vertices. One could also consider the {\em edge weighted} variant where every edge has a weight and the weight of a walk is the sum of the weight of the edges on the walk (counting multiplicities of edges). One can reduce this variant to the vertex weighted variant as follows. Let $G$ be the input graph, we make a new graph $G'$ from $G$. Here each edge $uv$ of $G$ is replaced by a new vertex $x_{uv}$ with $u$ being the only in-neighbor and $v$ being the only out-neighbor of $x_{uv}$. The weight of $x_{uv}$ is set to the weight of the edge $uv$. The weight of the vertices of $G'$ that correspond to the vertices of $G$ is set to $0$. An $r$-simple $k$-path in $G$ corresponds to an $r$-simple $(2k+1)$-path in $G'$ of the same weight. On the other hand, a minimum weight  $r$-simple $(2k+1)$-path in $G'$ must start and end in vertices corresponding to vertices of $G$, since these have weight $0$.  Such $r$-simple $(2k+1)$-paths in $G'$ correspond to an $r$-simple $k$-path in $G$  of the same weight. Thus we can run the algorithm for Theorem~\ref{thm:rsimpalg} on $G'$, obtaining an algorithm for the edge weighted variant with running time $r^{O(k/r)}n^{O(1)}$.

\subsection{\rpacking}

We say that $P\subseteq [n]$ is a \qset if $|P|=q$.
We say that a family of \qsets $\A=\set{P_1,\ldots, P_t}$
is an \emph{\rpack} if every element $j\in [n]$ appears in at most $r$
of the \qsets in $\A$.
The \rpacking problem asks whether a given family of \qsets contains
an \rpack. Again, in case universe $[n]$ is equipped with a weight function, we can say that a family of \qsets is {\em{weighted}} by setting the weight of a \qset to be the total weight of its elements. The weight of an \rpack is defined as the sum of weights of its sets.


\defparproblemu{\rpacking}{Weighted family $\F$ of \qsets, integers $r,p,q$}{$r,p,q$}{Determine whether there exists a subfamily $\A\subseteq \F$
with $|\A| = p$ that is an \rpack, and if so then return such $\A$ of minimal total weight.}

%
%
%

\begin{thm}\label{thm:rpack}
Let $k\triangleq p\cdot q$.
Then \rpacking can be solved in deterministic time $\O(|\F|\cdot \repcomputenon \cdot n\log^2 n)$.
\end{thm}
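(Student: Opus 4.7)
The plan is to adapt the dynamic programming scheme from Theorem~\ref{thm:rsimpalg} to build packings one set at a time, iterating over $\F$ rather than over the vertices of a graph. Fix an arbitrary enumeration $\F=\set{S_1,\ldots,S_{|\F|}}$, and identify each \qset $S_j$ with its indicator vector in $\rr^n$, which is a (0/1-valued) \rkset of size $q$ for $k\triangleq pq$. A subfamily $\A=\set{S_{j_1},\ldots,S_{j_\ell}}\subseteq \F$ is then an \rpack of size $\ell$ if and only if the multiset sum $S_{j_1}+\cdots+S_{j_\ell}$ is an \rkset of size $\ell q$, since the condition that no element is covered more than $r$ times is exactly the \rkcons condition. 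Accordingly, for each $0\leq i\leq |\F|$ and $0\leq \ell \leq p$, I would define $\P_{i,\ell}$ as the family of multisets of the form $S_{j_1}+\cdots+S_{j_\ell}$ with $1\leq j_1<\cdots<j_\ell\leq i$ that happen to be valid \rksets. The goal is to compute, for each pair $(i,\ell)$, a subfamily $\rep{\P}_{i,\ell}\subseteq \P_{i,\ell}$ that represents $\P_{i,\ell}$ and has size at most $\repsize$.

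The base case is trivial: $\rep{\P}_{0,0}=\set{\emptyset}$ and $\rep{\P}_{0,\ell}=\emptyset$ for $\ell>0$. The inductive step exploits the natural recurrence
\[\P_{i,\ell} = \P_{i-1,\ell}\cup \left(\P_{i-1,\ell-1}\bullet \set{S_i}\right),\]
where we condition on whether or not $S_i$ belongs to the chosen subfamily. By Lemmas~\ref{lem:repsetsunion} and~\ref{lem:bulletrep}, the union $\rep{\P}_{i-1,\ell}\cup \left(\rep{\P}_{i-1,\ell-1}\bullet \set{S_i}\right)$ already represents $\P_{i,\ell}$; I would then invoke the algorithm of Corollary~\ref{cor:repset} with parameter $k=pq$ to shrink this union down to a representative family $\rep{\P}_{i,\ell}$ of size at most $\repsize$. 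Since each element of $\rep{\P}_{i,\ell}$ is inherited from $\P_{i,\ell}$, it truly corresponds to the multiset sum of $\ell$ distinct sets from $\set{S_1,\ldots,S_i}$, so no spurious ``packings'' are introduced. Once the whole table is computed, Lemma~\ref{lem:repsetnotempty} guarantees that a minimum-weight \rpack of size $p$, if it exists, can be read off as the minimum-weight element of $\rep{\P}_{|\F|,p}$; an empty $\rep{\P}_{|\F|,p}$ certifies that no \rpack of size $p$ exists.

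For the running time, the table has $O(|\F|\cdot p)$ cells. At each cell the union step produces a family of size at most $2\repsize$, and applying Corollary~\ref{cor:repset} to this family costs time $\O(\repsize\cdot \sepcompute)=\O(\repcomputenon \cdot n\log^2 n)$. Since $p\leq k$, the factor $p$ is absorbed into the $\O$ notation, yielding a total running time of $\O(|\F|\cdot \repcomputenon \cdot n\log^2 n)$, as claimed. The main subtle point is that the multiset encoding loses track of \emph{which} \qsets were used to form a given sum, so we could in principle confuse using one set twice with using two different sets that share elements; this is precisely what forces us to process $\F$ in a fixed order and decide one-by-one whether to include each $S_i$, rather than using a more symmetric construction.
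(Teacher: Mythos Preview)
Your proof is correct and in fact more careful than the paper's on exactly the point you flag at the end. The paper runs a one-dimensional DP over the packing size only: it sets $\P'\triangleq \rep{\P}_i\bullet \F$ and then shrinks via Corollary~\ref{cor:repset} to obtain $\rep{\P}_{i+1}$, relying on the claimed identity $\P_{i+1}=\P_i\bullet\F$. As you observe, once packings are encoded as multiset sums this identity is only an inclusion $\P_{i+1}\subseteq \P_i\bullet\F$, since $\P_i\bullet\F$ also contains sums that reuse a set already present in the partial packing; in particular, for $\F=\{S\}$, $p=2$, $r\geq 2$ the paper's iteration would happily produce $S+S$ even though no subfamily of size $2$ exists. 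Your two-dimensional DP over $(i,\ell)$, processing the sets of $\F$ in a fixed order and deciding for each whether to include it, is the standard way to enforce distinctness and closes this gap cleanly. The cost is an extra factor of $p$ in the number of cells, which as you note is absorbed by the $\O$ notation, so the final running time matches the statement.
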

\begin{proof}
Let $\F$ be the input family of \qsets.
For $0\leq i\leq p$, denote by $\P_i$ the set of \rpacks of size $i$ in $\F$.
Note that an element $\A= \set{P_1,\ldots,P_i}$ of $\P_{i}$
can be viewed as an \rkset: We think of $\A$ as the multiset $\A=P_1+\ldots+P_i$.
We know that this multiset has size at most $p\cdot q= k$ and every element $j\in [n]$ appears
in $\A$ at most $r$ times. We will compute, for each $0\leq i\leq p$,
a subfamily $\rep{\P}_{i}\subseteq \P_{i}$ that represents
$\P_{i}$ and has size at most $\repsize$. By Lemma~\ref{lem:repsetnotempty}, a minimum-weight \rpack in $\F$ can be recovered by taking a minimum-weight element of $\rep{\P}_{p}$. In particular, if $\rep{\P}_p$ is empty, then no \rpack exists in $\F$. Thus, such an algorithm can be used to solve \rpacking. 

We proceed with the algorithm description. For $i=0$, we take $\rep{\P}_0 = \P_0 = \set{\emptyset}$. Assume we have computed $\rep{\P}_i$ for some $0\leq i<p$.
We describe the computation of $\rep{\P}_{i+1}$.

\begin{enumerate}
\item We compute $\P'\triangleq  \rep{\P}_{i} \bullet \F$.
Clearly, we have that $|\P'| \leq |\F|\cdot |\rep{\P}_i|\leq |\F| \cdot \repsize$.
We claim that $\P'$ represents $\P_{i+1}$.
Note first that $\P_{i+1} = \P_i \bullet \F$.
Hence, by Lemma~\ref{lem:bulletrep}, $\P' = \rep{\P}_{i}\bullet \F$
represents $\P_{i} \bullet \F= \P_{i+1}$.

Computing $\P'$ directly from the definition requires time $\O(|\F| \cdot |\rep{P}_i|\cdot n)=\O(|\F|\cdot \sepcompute)$.

\item Now, using the algorithm of Corollary~\ref{cor:repset}, compute a set $\rep{\P}_{i+1}$ that represents $\P'$ and has size $|\rep{\P}_{i+1}| =\repsize$.
    This takes time $\O(|\P'|\cdot \sepcompute)= \O(|\F| \cdot \repcomputenon\cdot  n\log^2 n)$.
\end{enumerate}
Since we repeat this procedure $p$ times, the claimed running time follows
\end{proof}

In Theorem~\ref{thm:rpack} we gave an algorithm for \rpacking where every element has a weight and the weight of a set is equal to the sum of the weights of the elements. A related variant is one where a weight function $w : {\cal F} \rightarrow \mathbb{N}$ is given as input. That is, every set $P \in {\cal F}$ has its own weight $w(P)$, and the weight of a subfamily ${\cal A} \subseteq {\cal F}$ is the sum of the weights of the sets in ${\cal A}$. We will call this the {\em set weighted} \rpacking problem. The set weighted problem can be reduced to the original by adding for every set $P \in {\cal F}$ a new element $e_P$ of weight $w(P)$, inserting $e_P$ into $P$ and giving $e_P$ weight $w(P)$. Note that $P$ is the only set in the new instance that contains $e_P$. All elements corresponding to the elements of the instance of set weighted \rpacking are given weight $0$. All sets in the new instance have size $q+1$, and $r$-packings in the new instance correspond to $r$-packings in the original instance with the same weight. Thus we can apply the algorithm of Theorem~\ref{thm:rpack} on the new instance in order to solve the instance of set weighted \rpacking. This yields an algorithm for set weighted \rpacking with running time $|{\cal F}|r^{O(k/r)}n^{O(1)}$.

\subsection{\rmondetect}
In this subsection we consider polynomials from ring $\Z[X_1,\ldots,X_n]$. We say a monomial $X_1^{d_1}\cdots X_n^{d_n}$
is an \emph{\rmon} if for all $i\in [n]$, we have $d_i\leq r$.
We say the monomial is an \emph{\rkmon} if the above holds
and the total degree of the monomial is $k$.
Let $C$ be an arithmetic circuit computing a polynomial $f(X_1,\ldots,X_n)\in \Z[X_1,\ldots,X_n]$.
We say $C$ is \emph{non-canceling} if it contains only variables at its leaves (i.e., no constants),
and only addition and multiplication gates (i.e., no substractions).
For a non-canceling circuit $C$, we define $|C|$ to be the number of multiplication and addition gates plus
the number of leaves (each containing a variable). We assume the fan-in of a non-canceling circuit is two, i.e.,
each multiplication and addition gate has at most two wires coming in. (This will simply be convenient for bounding the running time as a function of $|C|$.)

\defparproblemu{\rmondetect}{A non-canceling circuit $C$ computing a polynomial $f(X_1,\ldots,X_n)\in \Z[X_1,\ldots,X_n]$, integers $r,k$.}{$r,k$}{Determine if there exists an \rkmon in $f$ with non-zero coefficient and if so, then return such a monomial.}

%
%
%

\begin{thm}\label{thm:rmonomial}
Given a non-canceling circuit $C$ computing a polynomial $f(X_1,\ldots,X_n)\in \Z[X_1,\ldots,X_n]$,
 \rmondetect can be solved in deterministic time $\O(|C|\cdot \mondetectcomputenon\cdot n\log^3 n)$.
\end{thm}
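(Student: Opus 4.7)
The plan is to mimic the dynamic programming approaches of Theorems~\ref{thm:rsimpalg} and~\ref{thm:rpack}, processing the gates of the non-canceling circuit $C$ in a bottom-up order. For each gate $g$, let $f_g\in \Z[X_1,\ldots,X_n]$ denote the polynomial it computes, and define $\P_g$ as the family of all \rksets $A\in \rr^n$ such that the monomial $X^A \triangleq \prod_{i=1}^n X_i^{A_i}$ appears with non-zero coefficient in $f_g$. I would inductively compute a subfamily $\rep{\P}_g\subseteq \P_g$ that represents $\P_g$ and has size $\repsize$. By Lemma~\ref{lem:repsetnotempty}, the polynomial $f$ computed at the root contains an \rkmon if and only if $\rep{\P}_{\mathrm{root}}$ contains a multiset of size $k$, which then directly yields the monomial to be returned.

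The induction runs over the three types of gates. If $g$ is a leaf for variable $X_i$, we set $\rep{\P}_g = \P_g = \set{\set{i}}$. If $g$ is an addition gate with children $g_1,g_2$, then since $C$ is non-canceling no cancellation happens in $f_{g_1}+f_{g_2}$, so $\P_g = \P_{g_1}\cup \P_{g_2}$; hence by Lemma~\ref{lem:repsetsunion} the union $\rep{\P}_{g_1}\cup \rep{\P}_{g_2}$ represents $\P_g$, and we invoke Corollary~\ref{cor:repset} to shrink it back to size $\repsize$. If $g$ is a multiplication gate with children $g_1,g_2$, the key observation is again that, since $C$ is non-canceling, all intermediate polynomials have non-negative coefficients, so no cancellation happens when expanding $f_{g_1}\cdot f_{g_2}$. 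Combined with the fact that a monomial of degree larger than $k$ or with some exponent exceeding $r$ can never shrink into an \rkmon along the remaining gates of $C$, this yields the identity $\P_g = \P_{g_1}\bullet \P_{g_2}$. Now Lemma~\ref{lem:bulletrep} guarantees that $\rep{\P}_{g_1}\bullet \rep{\P}_{g_2}$ represents $\P_g$, and again Corollary~\ref{cor:repset} shrinks this family to size $\repsize$.

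The main conceptual obstacle is precisely the above invocation of the non-canceling property: without it, one could not safely truncate to \rksets at intermediate gates, since monomials of high degree or high individual exponent could in principle participate in cancellations further up the circuit. The running-time analysis is then routine. The dominating cost at each gate is that of a multiplication step: one first assembles the intermediate family $\rep{\P}_{g_1}\bullet \rep{\P}_{g_2}$ of size at most $|\rep{\P}_{g_1}|\cdot |\rep{\P}_{g_2}|= \O(r^{12k/r}\cdot 2^{O(k/r)}\cdot \log^2 n)$, and then shrinks it via Corollary~\ref{cor:repset} in time $\O(r^{12k/r}\cdot 2^{O(k/r)}\cdot \log^2 n\cdot \sepcompute)=\O(\mondetectcomputenon\cdot n\cdot \log^3 n)$. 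Summing over all $|C|$ gates of the circuit yields the claimed total bound.
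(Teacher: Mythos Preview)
Your proposal is correct and follows essentially the same approach as the paper: bottom-up dynamic programming over the gates, with $\P_g$ defined as the family of \rksets corresponding to \rkmons of $f_g$, using Lemma~\ref{lem:repsetsunion} at addition gates, Lemma~\ref{lem:bulletrep} at multiplication gates, and Corollary~\ref{cor:repset} to shrink after each step. Your running-time analysis also matches the paper's; the only cosmetic difference is that your remark about monomials ``never shrinking along the remaining gates'' is extra motivation rather than part of the argument, since the identity $\P_g=\P_{g_1}\bullet\P_{g_2}$ already follows locally from non-cancellation.
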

\begin{proof}
 For each gate $s$ of $C$, let $f_s$ be the
 polynomial computed at $s$. Define $\P_s$ to be the set of \rmons of total degree at most $k$ that appear in
 $f_s$ with nonzero coefficient.
 We can view $\P_s$ as a family of \rksets: An \rmon $M = X_1^{d_1}\cdots X_n^{d_n}$ of total degree at most $k$
 corresponds to the \rkset where each $j\in [n]$ appears $d_j$ times (in this case, we put uniform weights on the elements of universe $[n]$).
 We present an algorithm that computes, for every gate $s$ of $C$,
 a subfamily $\rep{\P}_s\subseteq \P_s$ of size $\repsize$ that represents $\P_s$.
 Let $s_{out}$ be the output gate of $C$.
 By Lemma~\ref{lem:repsetnotempty}, if $f$ contains an \rmon of total degree $k$, then
 $\rep{\P}_{s_{out}}$ will contain such a monomial. Hence, to solve \rmondetect it suffices to check whether $\rep{\P}_{s_{out}}$ is nonempty.

 We compute the sets $\rep{\P}_{s}$ from bottom to top. For $s$ being an input gate containing variable $X_i$, we simply put $\rep{\P}_s=\P_s=\set{\set{i}}$.
  Take then any non-input gate $s$, and suppose we have computed $\rep{\P}_{s_1}$ and $\rep{\P}_{s_2}$
 for the gates $s_1$ and $s_2$ having wires into $s$.
 \begin{enumerate}
  \item If $s$ is an addition gate, then we define $\P'\triangleq\rep{\P}_{s_1}\cup\rep{\P}_{s_2}$.
  We claim that $\P'$ represents $\P_s$:
 Since $C$ is non-canceling, the set of monomials that appear in $f_s$ with a nonzero coefficient
 is simply the union of the sets of monomials of appearing in $f_{s_1}$ and in $f_{s_2}$.
 In particular, $\P_s = \P_{s_1}\cup \P_{s_2}$.
 Therefore, by Lemma~\ref{lem:repsetsunion} we infer that $\P' = \rep{\P}_{s_1}\cup \rep{\P}_{s_2}$ represents $\P_s$. Note that $|\P'|\leq |\rep{\P}_{s_1}|+|\rep{\P}_{s_2}|=\repsize$ and $\P'$ can be computed in time $\O(|\rep{\P}_{s_1}|\cdot|\rep{\P}_{s_2}| \cdot n)=\O(\repcomputenon \cdot n\log^2 n)$.
  \item If $s$ is a multiplication gate, then we define $\P' \triangleq \rep{\P}_{s_1}\bullet \rep{\P}_{s_2}$.
  Since $C$ is non-canceling, the set of monomials appearing in $f_s$ is exactly the set of all products of a monomial
  appearing in $f_{s_1}$ and a monomial appearing in $f_{s_2}$. In particular, $\P_s$ is exactly the set of these products that are also
  \rkmons.
  When viewed as a multiset, the product of monomials $M_1$ and $M_2$
  is the multiset $M_1+M_2$.
  Thus, we have that $\P_s = \P_{s_1}\bullet \P_{s_2}$ and therefore, using Lemma \ref{lem:bulletrep}, we infer that
  $\P' = \rep{\P}_{s_1} \bullet \rep{\P}_{s_2}$ represents $\P_s$. Note that $|\P'|\leq |\rep{\P}_{s_1}|\cdot|\rep{\P}_{s_2}|=\O(\repcomputenon \cdot \log^2 n)$ and $\P'$ can be computed in time $\O(|\rep{\P}_{s_1}|\cdot|\rep{\P}_{s_2}| \cdot n)=\O(\repcomputenon \cdot n\log^2 n)$.

   \item  Now, using the algorithm of Corollary~\ref{cor:repset}, compute subfamily $\rep{\P}_{s}$ of size $\repsize$ that represents $\P'$.
    This takes time $\O(|\P'|\cdot \sepcompute)= \O(\mondetectcomputenon\cdot n\log^3 n)$.
   \end{enumerate}
Since the above procedure is applied to every gate of $C$, the claimed running time follows.
\end{proof}

We remark that, after a trivial modification, the algorithm above can equally easily solve also a weighted variant of \rmondetect, where each variable is equipped with a weight and we are interested in extracting a monomial of minimum total weight, defined as the sum of the weights of variables times their degrees. It is not hard to reduce the problems \rsimp and \rpacking, considered in the previous sections, to this variant; we leave the details to the reader.

\newcommand{\probSpanTree}{{\sc{Degree-Bounded Spanning Tree}}\xspace}
\newcommand{\Oh}{O}
\newcommand{\Ohstar}{\Oh^*}

\section{Low degree monomials and low degree spanning trees}\label{sec:kirchoff}

Let $G$ be a simple, undirected graph with $n$ vertices. Let $\mathcal{T}$ be the family of spanning trees of $G$; In particular, if $G$ is not connected then $\mathcal{T}=\emptyset$. With every edge $e\in E(G)$ we associate a variable $y_e$. The {\em{Kirchhoff's polynomial}} of $G$ is defined as:
$$K_G((y_e)_{e\in E(G)})=\sum_{T\in \mathcal{T}} \prod_{e\in E(T)} y_e.$$
Thus, $K_G$ is a polynomial in $\mathbb{Z}[(y_e)_{e\in E(G)}]$. Let $v_1,v_2,\ldots,v_n$ be an arbitrary ordering of $V(G)$. The {\em{Laplacian}} of $G$ is an $n\times n$ matrix $L_G=[a_{ij}]$, where
$$
a_{ij}=\begin{cases} \sum_{e\textrm{ incident to }v_i}\, y_e & \mbox{if $i=j$,}\\ -y_{v_iv_j} & \mbox{if $i\neq j$ and $v_iv_j\in E(G)$,}\\ 0 & \mbox{if $i\neq j$ and $v_iv_j\notin E(G)$.}\end{cases}
$$
Observe that $L_G$ is symmetric and the entries in every column and in every row of $L_G$ sum up to zero. Then it can be shown that all the {\em{first cofactors}} of $L_G$, i.e., the determinants of matrix $L_G$ after removing a row and a column with the same indices, are equal. Let $N_G$ be this common value; then $N_G$ is again a polynomial over variables $(y_e)_{e\in E(G)}$. The Kirchhoff's Matrix Tree Theorem, in its general form, states that these two polynomials coincide.

\begin{thm}[Matrix Tree Theorem]\label{thm:kirchhoff}
$K_G=N_G$.
\end{thm}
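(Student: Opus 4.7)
The plan is to prove the Matrix Tree Theorem via the Cauchy--Binet formula applied to a natural factorization of the Laplacian. Fix an arbitrary orientation of the edges of $G$, and let $m=|E(G)|$. Define the signed incidence matrix $B\in \Z[(y_e)_{e\in E(G)}]^{n\times m}$ (in fact $B$ has entries in $\{-1,0,1\}$) by setting $B_{v,e}=+1$ if $e$ leaves $v$, $B_{v,e}=-1$ if $e$ enters $v$, and $B_{v,e}=0$ otherwise. Let $Y$ be the $m\times m$ diagonal matrix with $Y_{e,e}=y_e$. First I would verify directly from the definitions that $L_G=BYB^T$: the diagonal entry at $v$ equals $\sum_e B_{v,e}^2\, y_e=\sum_{e\ni v}y_e$, while the entry at $(v,w)$ for $v\neq w$ equals $\sum_e B_{v,e}B_{w,e}y_e$, which collapses to $-y_{vw}$ if $vw\in E(G)$ and $0$ otherwise, since the two endpoints of an edge always contribute opposite signs.

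Next, pick (without loss of generality) the cofactor obtained by deleting the row and column indexed by $v_n$, which by hypothesis equals $N_G$. Let $B'$ be $B$ with the row of $v_n$ removed, so that $B'\in \Z^{(n-1)\times m}$ and the corresponding principal submatrix of $L_G$ equals $B'Y(B')^T$. Applying the Cauchy--Binet formula to the product of an $(n-1)\times m$ matrix $B'Y$ and an $m\times(n-1)$ matrix $(B')^T$, and using multilinearity in the columns of $B'Y$ to factor out the weights, gives
\[
N_G \;=\; \det\bigl(B'Y(B')^T\bigr)\;=\;\sum_{\substack{S\subseteq E(G)\\ |S|=n-1}}\det(B'_S)^2\cdot \prod_{e\in S} y_e,
\]
where $B'_S$ denotes the square submatrix of $B'$ with columns indexed by $S$.

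The key combinatorial step is then to evaluate $\det(B'_S)$ for each $(n-1)$-subset $S$ of edges: I claim $\det(B'_S)^2=1$ when $S$ is the edge set of a spanning tree of $G$, and $\det(B'_S)=0$ otherwise. For the zero case, if $S$ is not a spanning tree then, having $n-1$ edges on $n$ vertices, $S$ must contain a cycle $C$; summing the columns of $B$ (not merely $B'$) indexed by the edges of $C$ with appropriate $\pm 1$ coefficients matching an Eulerian traversal of $C$ yields the zero vector, so the columns of $B'_S$ indexed by $C$ are linearly dependent and $\det(B'_S)=0$. For the spanning-tree case, I would argue by induction on $n$: a spanning tree has at least two leaves, so there is a leaf $v$ of the tree such that either the row of $v$ is still present in $B'$ (if $v\neq v_n$), or alternatively the column of $B'_S$ corresponding to the unique tree-edge at $v_n$ has only one nonzero entry. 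Expanding along that row (respectively column) contributes a factor of $\pm 1$ and produces the submatrix associated with the spanning tree of $G-v$ (respectively $G-v_n$), to which the inductive hypothesis applies.

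Combining these facts yields $N_G=\sum_{T\in\mathcal{T}}\prod_{e\in E(T)} y_e=K_G$, which is the desired identity. I expect the main technical obstacle to be the careful bookkeeping of signs and indices in both the Cauchy--Binet step and the leaf-removal induction, in particular the case analysis depending on whether the distinguished vertex $v_n$ is or is not a leaf of the current subtree; all other pieces reduce to straightforward verifications from the definitions of $B$, $Y$, and $L_G$.
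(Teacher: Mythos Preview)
Your argument is correct and is essentially the standard textbook proof of the Matrix Tree Theorem via the factorization $L_G=BYB^T$ together with the Cauchy--Binet formula and the total unimodularity of the incidence matrix. One minor simplification: since a tree on $n\geq 2$ vertices has at least two leaves, you can always pick a leaf $v\neq v_n$ and expand along the row of $v$ in $B'_S$, so the separate case analysis for ``$v_n$ is a leaf'' is never actually needed in the induction.

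As for the comparison with the paper: the paper does not give its own proof of this theorem. It states the result as the classical Matrix Tree Theorem and simply refers the reader to the literature (specifically to Tutte's book) for a proof of the general weighted form. So there is nothing to compare your approach against; your write-up supplies exactly the kind of proof one would find in the cited reference.
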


We remark that the Matrix Tree Theorem is usually given in the more specific variants, where all variables $y_e$ are replaced with $1$; then the theorem expresses the number of spanning trees of $G$ in terms of the first cofactors of $L_G$. However, the proof can be easily extended to the above, more general form; see for instance~\cite{tutte-book}.

Observe that Theorem~\ref{thm:kirchhoff} provides a polynomial-time algorithm for evaluating $K_G$ over a given field and vector of values of variables $(y_e)_{e\in E(G)}$. Indeed, we just need to construct matrix $L_G$, remove, say, the first row and the first column, and compute the determinant. We now present how this observation can be used to design a fast exact algorithm for the \probSpanTree problem.

\begin{thm}\label{thm:positive-span}
The \probSpanTree problem can be solved in randomized time $\Ohstar(d^{\Oh(n/d)})$ with false negatives.
\end{thm}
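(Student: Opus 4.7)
The plan is to reduce \probSpanTree to an instance of \rmondetect via a polynomial encoding based on the Matrix Tree Theorem, and then invoke the randomized algorithm of Abasi et al.~\cite{ABGH14}. For every vertex $v \in V(G)$ I introduce a fresh variable $x_v$, substitute $y_e \mapsto x_u \cdot x_v$ for each edge $e = uv$ in the Laplacian $L_G$, and let $\tilde{N}_G$ denote the first cofactor of the resulting matrix. Applying Theorem~\ref{thm:kirchhoff} and then specializing via this substitution yields
$$\tilde{N}_G \;=\; \sum_{T\in\mathcal{T}} \prod_{e=uv\in E(T)} x_u x_v \;=\; \sum_{T\in\mathcal{T}} \prod_{v\in V(G)} x_v^{\deg_T(v)}.$$
Since the determinant of an $(n-1)\times(n-1)$ matrix can be expressed by a textbook arithmetic circuit of size $n^{O(1)}$, we obtain in polynomial time an arithmetic circuit $C$ over the variables $(x_v)_{v \in V(G)}$ computing $\tilde{N}_G$.

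The key observation for the reduction is that every monomial appearing in $\tilde{N}_G$ with nonzero coefficient has the form $\prod_v x_v^{d_v}$, where $(d_v)_{v\in V(G)}$ is the degree sequence of some spanning tree of $G$; the coefficient itself equals the number of spanning trees realizing that degree sequence, which is a nonnegative integer. Consequently, $G$ admits a spanning tree of maximum degree at most $d$ if and only if $\tilde{N}_G$ contains a nonzero monomial whose individual degrees are all at most $d$; and since each spanning tree has $n-1$ edges and each edge contributes total $x$-degree $2$, every such monomial has total degree exactly $2(n-1)$. This is precisely a $(d, 2(n-1))$-monomial in the sense of \rmondetect. Feeding $C$ together with parameters $r = d$ and $k = 2(n-1)$ to the $\Ohstar(r^{O(k/r)})$-time randomized algorithm of Abasi et al.~\cite{ABGH14} therefore yields the claimed running time $\Ohstar(d^{O(n/d)})$.

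The main obstacle, and the reason only the randomized algorithm from~\cite{ABGH14} can be applied here, is that the determinant-based circuit $C$ is \emph{not} non-canceling: its construction inevitably uses subtractions arising from the off-diagonal entries $-y_{v_iv_j}$ of the Laplacian and from the signed expansion of the determinant, so the deterministic Theorem~\ref{thm:rmonomial} does not apply. However, these subtractions do not corrupt the reduction itself, because each spanning tree contributes $+1$ to its monomial in $\tilde{N}_G$ and hence the aggregated coefficient of every appearing monomial is a nonnegative count that cannot accidentally vanish through cancellation. The resulting algorithm therefore inherits precisely the one-sided (false-negative) error of the \rmondetect subroutine, matching the statement of the theorem.
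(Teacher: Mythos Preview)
Your proposal is correct and follows essentially the same route as the paper: substitute $y_{uv}\mapsto x_u x_v$ in the Kirchhoff/Laplacian polynomial, observe that the resulting polynomial equals $\sum_{T}\prod_v x_v^{\deg_T(v)}$ and is computable by a $\poly$-size (non-non-canceling) circuit via the Matrix Tree Theorem, and then invoke the randomized $(r,k)$-\textsc{Monomial Detection} algorithm of Abasi et al.\ with $r=d$ and $k=2(n-1)$. Your additional remarks about positivity of the monomial coefficients and the reason the deterministic Theorem~\ref{thm:rmonomial} is inapplicable are accurate and only make the argument more explicit.
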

\begin{proof}
Associate every vertex $v$ of the given graph $G$ with a distinct variable $x_v$. Let $\overline{K}_G\in \mathbb{Z}[(x_v)_{v\in V(G)}]$ be a polynomial defined as $K_G$ with every variable $y_{uv}$, for $uv\in E(G)$, evaluated to $x_ux_v$. Then it follows that
$$\overline{K}_G((x_v)_{v\in V(G)})=\sum_{T\in \mathcal{T}} \prod_{v\in V(G)} x_v^{\deg_T(v)}.$$
 Observe also that $\overline{K}_G$ is $2(n-1)$-homogenous, that is, all the monomials of $\overline{K}_G$ have their total degrees equal to $2(n-1)$. Thus, graph $G$ admits a spanning tree with maximum degree at most $d$ if and only if polynomial $\overline{K}_G$ contains a $(d,2(n-1))$-monomial. Using Theorem~\ref{thm:kirchhoff} we can construct a $\poly$-sized circuit evaluating $\overline{K}_G$. Hence, verifying whether $\overline{K}_G$ contains a $(d,2(n-1))$-monomial boils down to applying the algorithm of Abasi et al.~\cite{ABGH14} for \rmondetect with $r=d$ and $k=2(n-1)$. This algorithm runs in randomized time $\Ohstar(d^{\Oh(n/d)})$ and can only produce false negatives.
\end{proof}

Let us repeat that in the proof of Theorem~\ref{thm:positive-span} we could not have used Theorem~\ref{thm:rmonomial} instead of the result of Abasi et al.~\cite{ABGH14}, because the constructed circuit is not non-cancelling. Derandomizing the algorithm and extending it to the weighted setting remains hence open.

Interestingly, the running time of the algorithm of Theorem~\ref{thm:positive-span} is essentially optimal, up to the $\log d$ factor in the exponent. A similar lower bound for \rsimp was given by Abasi et al.~\cite{ABGH14}.

\begin{thm}\label{thm:negative-span}
Unless ETH fails, there exists a constant $s>0$ such that for no fixed integer $d\geq 2$ the \probSpanTree problem with the degree bound $d$ can be solved in time $\Ohstar(2^{sn/d})$.
\end{thm}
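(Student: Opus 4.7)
The plan is to reduce Hamiltonian Path to \probSpanTree via a pendant-leaf gadget and invoke the ETH lower bound for Hamiltonian Path. By ETH together with the Sparsification Lemma, there exists a constant $c>0$ such that Hamiltonian Path on $N$-vertex graphs cannot be solved in time $\Ohstar(2^{cN})$. I will fix $s$ to be any positive real strictly smaller than $c$ and show that, for every fixed integer $d\geq 2$, an $\Ohstar(2^{sn/d})$ algorithm for \probSpanTree with degree bound $d$ would yield an $\Ohstar(2^{sN})$ algorithm for Hamiltonian Path, contradicting ETH.

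The reduction: given an instance $H$ of Hamiltonian Path on $N$ vertices, construct $G$ by attaching $d-2$ fresh pendant vertices to every vertex of $H$ (for $d=2$ we add no leaves, so $G=H$ and the \probSpanTree instance is literally Hamiltonian Path). The resulting graph has exactly $n=(d-1)N$ vertices. For correctness, if $H$ admits a Hamiltonian path $P$, then $P$ together with all the pendant edges is a spanning tree of $G$ in which every internal vertex of $P$ has degree $d$ and the two endpoints of $P$ have degree $d-1$. Conversely, in any spanning tree $T$ of $G$ of maximum degree $d$, each added pendant is a leaf of $G$ so its unique edge must be used by $T$, contributing $d-2$ to the $T$-degree of its parent. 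Hence every original vertex has at most $d-(d-2)=2$ edges in $T$ going to other original vertices. Removing the pendants from $T$ yields a connected, acyclic subgraph on $V(H)$ of maximum degree at most $2$, i.e., a Hamiltonian path of $H$.

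Given a hypothetical $\Ohstar(2^{sn/d})$ algorithm for \probSpanTree with degree bound $d$, applying it to $G$ decides Hamiltonian Path on $H$ in time $\Ohstar(2^{s(d-1)N/d})\leq \Ohstar(2^{sN})$. Since $s<c$, this contradicts the ETH-based lower bound for Hamiltonian Path. As the reduction is polynomial-time for every fixed $d$, this finishes the argument and establishes the theorem.

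There is essentially no technical obstacle here: the gadget is so simple that the main verification is the bijection between Hamiltonian paths of $H$ and bounded-degree spanning trees of $G$, and the small arithmetic check that the factor $(d-1)/d$ is bounded above by $1$ uniformly in $d$, allowing a single constant $s<c$ to work for all $d\geq 2$ simultaneously. The only mild subtlety worth flagging is the base case $d=2$, where the gadget is trivial and the statement reduces directly to the ETH lower bound for Hamiltonian Path; this is consistent with (and slightly weaker than) the bound we need for $d\geq 3$.
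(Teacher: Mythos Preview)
Your proof is correct and follows essentially the same approach as the paper: the identical pendant-leaf reduction from \textsc{Hamiltonian Path}, the same vertex count $n=(d-1)N$, and the same arithmetic $(d-1)/d\leq 1$ to push the contradiction through. The only cosmetic differences are that you take $s$ strictly below the ETH constant $c$ (the paper simply reuses the ETH constant itself, which already suffices), and you spell out the two directions of the reduction and the $d=2$ base case in more detail than the paper does.
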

\begin{proof}
It is known (see e.g.~\cite{the-awesome-book}) that, assuming ETH, there exists a constant $s>0$ such that the {\sc{Hamiltonian Path}} problem cannot be solved in time $\Ohstar(2^{sn})$ on $n$-vertex graphs. Consider the following reduction from {\sc{Hamiltonian Path}} to \probSpanTree with the degree bound $d$: given an instance $G$ of {\sc{Hamiltonian Path}}, create $G'$ by attaching to every vertex $v\in V(G)$ a set of $d-2$ degree-$1$ vertices, adjacent only to $v$. Since the new vertices have to be leaves in every spanning tree of $G'$, it follows that every spanning tree $T'$ of $G'$ is in fact a spanning tree of $G$ with all the vertices of $V(G')\setminus V(G)$ attached as leaves. In particular, $G'$ admits a spanning tree with maximum degree $d$ if and only if $G$ admits a spanning tree with maximum degree $2$, i.e., a hamiltonian path.

Observe that $|V(G')|=(d-1)\cdot |V(G)|$. Hence, if there was an algorithm solving \probSpanTree with the degree bound $d$ in time $\Ohstar(2^{sn/d})$, then by composing the reduction with the algorithm we would be able to solve {\sc{Hamiltonian Path}} on an $n$-vertex graph in time $\Ohstar(2^{s(d-1)n/d})\leq \Ohstar(2^{sn})$, thus contradicting ETH.
\end{proof}

\section{Conclusions}\label{sec:concl}
In this paper we considered relaxation parameter variants of several well studied problems in parameterized complexity and exact algorithms. We proved, somewhat surprisingly, that instances with moderate values of the relaxation parameter are significantly easier than instances of the original problems. We hope that our work, together with the result of Abasi et al.~\cite{ABGH14} breaks the ground for a systematic investigation of relaxation parameters in parameterized complexity and exact algorithms. We conclude with mentioning some of the most natural concrete follow up questions to our work.

\begin{itemize}\setlength\itemsep{-.7pt}
\item We gave a determinstic algorithm for {\em non-cancelling}  $(r,k)$-{\sc Monomial Detection} with running time $2^{O(k\frac{\log r}{r})}|C|n^{O(1)}$, while Abasi et al.~\cite{ABGH14} gave a randomized algorithm with such a running time for $(r,k)$-{\sc Monomial Detection} without the non-cancellation restriction. Is there a determinstic algorithm for $(r,k)$-{\sc Monomial Detection} with running time $2^{O(k\frac{\log r}{r})}|C|n^{O(1)}$?

\item Does there exist a deterministic algorithm with running time $2^{O(n\frac{\log r}{r})}$ for {\sc Degree-Bounded Spanning Tree}? Note that a deterministic algorithm with running time $2^{O(k\frac{\log r}{r})}|C|n^{O(1)}$ for $(r,k)$-{\sc Monomial Detection} immediately imply such an algorithm for {\sc Degree-Bounded Spanning Tree}. 

\item Is there a $2^{O(k\frac{\log r}{r})}n^{O(1)}$ time algorithm for the problem where we are given as input a graph $G$, integers $k$ and $d$, and asked whether $G$ contains a subtree $T$ on at least $k$ vertices, such that the maximum degree of $T$ is at most $d$? Observe that for $k=n$ this is exactly the {\sc Degree-Bounded Spanning Tree} problem.

\item Is it possible to obtain polynomial kernels for problems with relaxation parameters with smaller and smaller size bounds as the relaxation parameter increases?

\item Are the $\log r$ (or $\log d$) factors in the exponents of our running time bounds necessary? For example, is there an algorithm for {\sc $r$-Simple $k$-Path} with running time $2^{O(k/r)}n^{O(1)}$? Or a  $2^{O(n/d)}$ time algorithm for {\sc Degree Bounded Spanning Tree}?
\end{itemize}


\section*{Acknowledgements}
The first author thanks Fedor V. Fomin for hosting him at a visit
in the University of Bergen where this research was initiated.
\bibliography{multiset}
\bibliographystyle{abbrv}
\end{document}